\pgfplotsset{compat=1.17}
\newcommand{\highlight}[1]{\boxed{#1}}
\definecolor{darkgreen}{rgb}{0.0, 0.5, 0.0}
\definecolor{orange}{rgb}{1.0, 0.49, 0.0}
\definecolor{modified}{rgb}{0.0, 0.0, 1.0}
\definecolor{cr}{rgb}{0.0, 0.0, 1.0}
\theoremstyle{plain}
\newtheorem{theorem}{Theorem}[section]
\newtheorem{lemma}[theorem]{Lemma}
\newtheorem{no-lemma}[theorem]{Lemma}
\newtheorem{corollary}[theorem]{Corollary}
\theoremstyle{definition}
\newtheorem{definition}[theorem]{Definition}
\theoremstyle{remark}
\icmltitlerunning{Quantum Policy Gradient Algorithm with Optimized Action Decoding}
\begin{document}

\newacronym{vqc}{VQC}{variational quantum circuit}
\newacronym{rl}{RL}{reinforcement learning}
\newacronym{qrl}{QRL}{quantum reinforcement learning}
\newacronym{ml}{ML}{machine learning}
\newacronym{qml}{QML}{quantum machine learning}
\newacronym{dnn}{DNN}{deep neural network}
\newacronym{qc}{QC}{quantum computing}
\newacronym{nisq}{NISQ}{noisy intermediate-scale quantum}
\newacronym{pg}{PG}{policy gradient}
\newacronym{qpg}{QPG}{quantum policy gradient}
\newacronym{mdp}{MDP}{Markov Decision Process}
\newacronym{spsa}{SPSA}{simultaneous perturbation stochastic approximations}
\newacronym{fim}{FIM}{Fisher information matrix}
\newacronym{pdf}{PDF}{probability density function}

\twocolumn[
\icmltitle{Quantum Policy Gradient Algorithm with Optimized Action Decoding}




\begin{icmlauthorlist}
\icmlauthor{Nico Meyer}{iis,phys}
\icmlauthor{Daniel D. Scherer}{iis}
\icmlauthor{Axel Plinge}{iis}
\icmlauthor{Christopher Mutschler}{iis}
\icmlauthor{Michael J. Hartmann}{phys}
\end{icmlauthorlist}

\icmlaffiliation{iis}{Fraunhofer IIS, Fraunhofer Institute for Integrated Circuits IIS, Nuremberg, Germany}
\icmlaffiliation{phys}{Department of Physics, Friedrich-Alexander University Erlangen-Nuremberg (FAU), Erlangen, Germany}

\icmlcorrespondingauthor{Nico Meyer}{nico.meyer@iis.fraunhofer.de}

\icmlkeywords{Quantum Computing, Reinforcement Learning, Quantum Machine Learning, Quantum Reinforcement Learning, Policy Gradients, CartPole, FrozenLake, ContextualBandits, Variational Quantum Circuits}

\vskip 0.3in
]



\printAffiliationsAndNotice{}  

\begin{abstract}
Quantum machine learning implemented by \glspl{vqc} is considered a promising concept for the \glsentrylong{nisq} computing era. Focusing on applications in \glsentrylong{qrl}, we propose an action decoding procedure for a \glsentrylong{qpg} approach. We introduce a quality measure that enables us to optimize the classical post-processing required for action selection, inspired by local and global quantum measurements. The resulting algorithm demonstrates a significant performance improvement in several benchmark environments. With this technique, we successfully execute a full training routine on a $5$-qubit hardware device. Our method introduces only negligible classical overhead and has the potential to improve \gls{vqc}-based algorithms beyond the field of \glsentrylong{qrl}.
\end{abstract}

\section{Introduction}

\Gls{rl} currently receives increasing attention due to its potential in a multitude of applications. In an \gls{rl} setup, an agent aims to learn a control strategy, i.e., a policy, for a specific problem. Training such a policy can require approximating a complex, multimodal distribution, which is often done with a \gls{dnn}. With increasing problem difficulty, this approach potentially has an undesirable sampling and model complexity \cite{Kakade_2003,Nielsen_2015,Poggio_2020}. Training data is obtained by interaction with the environment via actions, which returns a reward value and a new state. One can optimize the parameters of the policy to maximize the long-term reward with gradient-based techniques, forming a \gls{pg} algorithm. Real-world applications can be found e.g.\ in the domains of self-driving cars~\cite{Bojarski_2016} or MIMO beamforming~\cite{Maksymyuk_2018}.

Exploring the possibilities of other computing paradigms might elevate the impact of \gls{rl}, e.g.\ by circumventing the problems caused by increasing parameter complexity of \gls{dnn}-based models. \Gls{qc}, based on the idea of exploiting quantum mechanical properties for computation, might offer advantages in the approximation and sampling from complex distributions. Although the development of quantum computers is still in its infancy, a number of studies have already claimed experimental results beyond classical capabilities on specific tasks~\cite{Arute_2019,Zhong_2020,Wu_2021}.

\begin{figure}[t!]
\centering
\def\svgwidth{\columnwidth}
\import{figures/}{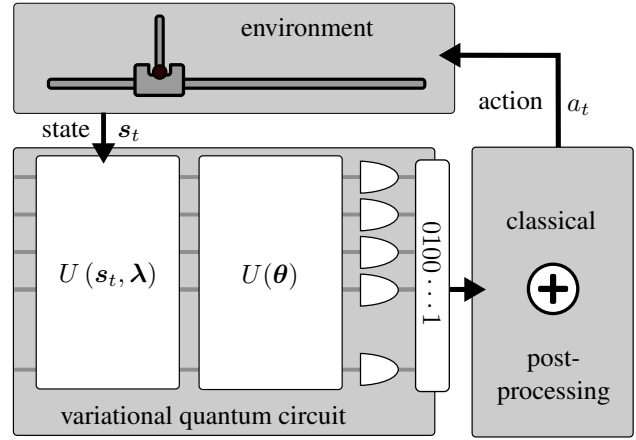}
\caption{Proposed method: The prepared quantum state is measured in the computational basis. The results are post-processed (with a function maximizing our proposed \emph{globality measure}) before selecting an action. The update of the parameters $\Theta = (\bm{\theta},\bm{\lambda})$ utilizes the same post-processing scheme.}
\label{fig:overview}
\vspace*{-3mm}
\end{figure}

 The nowadays available quantum devices are considered \gls{nisq} hardware~\cite{Preskill_2018}, i.e., they only provide a limited amount of qubits that are heavily affected by noise. Therefore, a major part of current research focuses on strategies and algorithms that are able to cope with these restrictions, while at the same time aiming for computational power beyond what is possible classically. A promising idea suggests using \emph{variational quantum algorithms} as a platform for \gls{qml}~\cite{Benedetti_2019,Cerezo_2021}, for which a certain degree of resilience to the inevitable hardware noise has been reported~\cite{Li_2017,Moll_2018,Sharma_2020,Fontana_2021}. Variational quantum algorithms use a \emph{\gls{vqc}}, which incorporates trainable parameters, and a classical optimization routine to optimize these parameters. When viewing the \glspl{vqc} as function approximators, the property of universal function approximation holds under certain conditions~\cite{Goto_2021,Schuld_2021a}. For specific problems, variational quantum algorithms are known to exhibit provable quantum advantage~\cite{Liu_2021,Sweke_2021}.

\textbf{Contribution.} In variational quantum algorithms, it is necessary to extract classical information from the prepared quantum state. Whereas there has been work on obtaining a maximal amount of information about the state via a limited amount of measurements \cite{Huang_2020}, our goal is to group measured bitstrings from a readout in the computational basis, such that a well performing \gls{rl} strategy emerges. Those aspects have, to our knowledge, not yet been explored for \gls{vqc}-based \gls{qpg} algorithms. For the \gls{qrl} setup, we refer to this task as \emph{action decoding}. Motivated by the \texttt{RAW}-\gls{vqc} policy~\cite{Jerbi_2021} in \cref{sec:qpg}, we start with a formulation in terms of projective measurements in \cref{subsec:partitioning}. This is then decomposed into a measurement in the computational basis and the successive application of a classical post-processing function in \cref{subsec:globality_measure}. Our developed \emph{globality measure} allows to compare specific instances of those functions. Furthermore, we propose a routine to construct an optimal (w.r.t.\ the globality measure) post-processing function. It is important to mention that, in contrast to the approach by \citeauthor{Jerbi_2021}, our procedure is feasible for problems with large action spaces.

We observe a strong correlation between \gls{rl} performance and our globality measure in \cref{subsec:exp_results} in the \gls{rl} environments \texttt{CartPole}, \texttt{FrozenLake}, and different configurations of \texttt{ContextualBandits}. Training converges much faster (or even at all) for policies with an underlying post-processing function that has a large globality value. The results are supported by an analysis of the effective dimension and Fisher information spectrum in \cref{subsec:exp_fim}. As our technique only marginally increases the classical overhead (while reducing the required \gls{vqc} size) it suggests itself as a tool for execution on \gls{nisq} devices. To demonstrate the efficiency of our algorithm, we execute the full \gls{rl} training routine for a \texttt{ContextualBandits} problem on a $5$-qubit quantum hardware device in \cref{subsec:quantum_hardware}.


\section{Related Work}

A summary of the current body of work on \gls{qrl} can be found in Meyer et al.~\cite{Meyer_2022}. Specific \gls{qrl} routines have already been realized experimentally~\cite{Saggio_2021}. An early instance of \gls{vqc}-based \gls{qrl} proposes to use a \gls{vqc} as an approximator for the action-value function~\cite{Chen_2020}. We follow the lines of \citeauthor{Jerbi_2021}, which uses the \gls{vqc} for policy approximation, forming a \gls{qpg} algorithm. Additional work on the \gls{qpg} approach include an extension to quantum environments~\cite{Sequeira_2022}, and a modified parameter update to reduce sampling complexity~\cite{Meyer_2023}. The ideas of value-function and policy approximation are combined into actor-critic approaches~\cite{Wu_2020,Kwak_2021}, which also can benefit from our contribution.

While the algorithmic routine of \gls{qpg} follows the idea of classical \gls{pg}, the design of the \gls{vqc} function approximator is an ongoing research field. The typical architecture features three different blocks, i.e., a data encoding layer, (potentially multiple) variational layers with trainable parameters, and some measurement observables that extract information from the prepared quantum state. There are some guidelines for designing data encoding~\cite{Perez_2020,Schuld_2021a,Periyasamy_2022} and variational layers~\cite{Sim_2019,Kandala_2017}, based on the specific problem type. We focus on the necessary measurements, which require special attention in the context of quantum information theory~\cite{Braginsky_1995,Nielsen_2010}, and also \gls{qml} with \glspl{vqc}~\cite{Schuld_2018,Cerezo_2021,Schuld_2021b}. However, the question how to best measure \gls{vqc} outputs and classically post-process them to optimize \gls{qrl} performance, yet alone \gls{qpg} performance, is still open.



\begin{figure*}[t]
    \subfigure[Variational block with $1$-qubit $R_z$ and $R_y$ gates parameterized by $\bm{\theta}$.]{
       \includegraphics[width=0.26\linewidth]{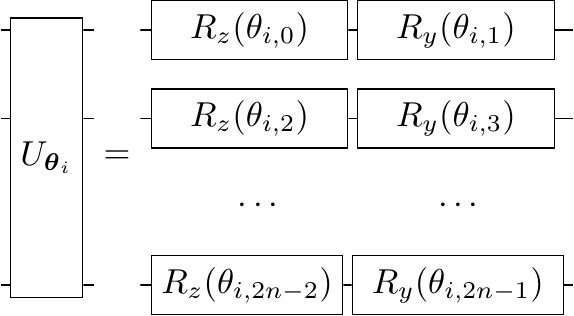}
    }
    \quad
    \subfigure[All-to-all structure of $CZ$ gates (hardware experiment uses $CX$ gates).]{
       \includegraphics[width=0.29\linewidth]{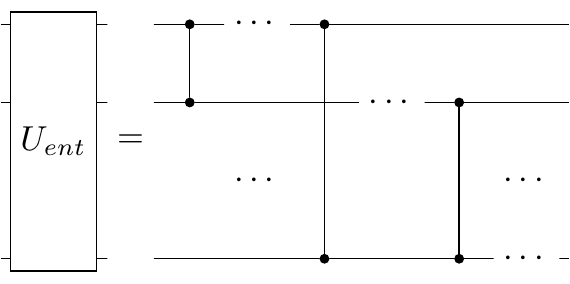}
    }
    \quad
    \subfigure[The $n$-dim \gls{rl} state $\bm{s}$ is encoded using $1$-qubit $R_y$ and $R_z$ rotations, with scaling parameters $\bm{\lambda}$.]{
       \includegraphics[width=0.36\linewidth]{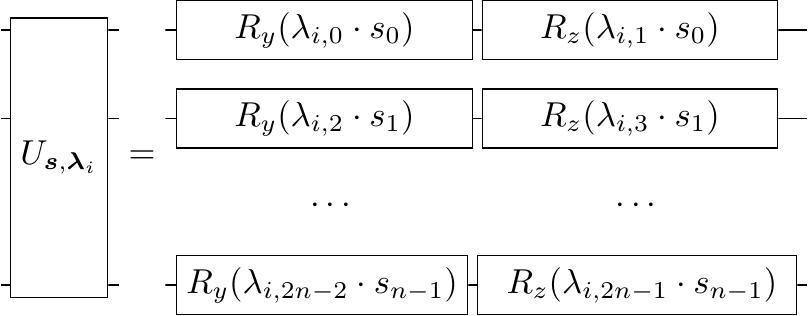}
    } \\
    \subfigure{
        \includegraphics[width=0.99\linewidth]{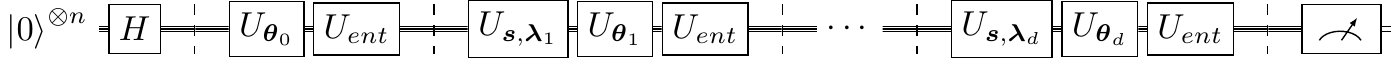}
    }
    \caption{\label{fig:vqc}Hardware-efficient quantum circuit, adapted from \citeauthor{Jerbi_2021}. The parameters are summarized in $\Theta = (\bm{\theta},\bm{\lambda})$. Depending on circuit depth $d$, the encoding blocks $U_{\bm{s},\bm{\lambda}}$, combined with variational blocks $U_{\bm{\theta}}$ and entanglement blocks $U_{ent}$, are repeated (i.e.\ for $d \geq 2$ data re-uploading~\cite{Perez_2020} is used). Measurements are performed in the computational basis.}
\end{figure*}


\section{\label{sec:qpg}Quantum Policy Gradient Algorithm}

\gls{rl} is an algorithmic concept to solve a complex task, where data is generated by interaction of an agent with an environment. The setup is usually described as a \gls{mdp}, i.e.\ a $5$-tuple $\left( \mathcal{S}, \mathcal{A}, \mathcal{R}, p, \gamma \right)$, where $\mathcal{S}$ is the state set, $\mathcal{A}$ is the set of available actions, $\mathcal{R} \subset \mathbb{R}$ is the reward space, $p:\mathcal{S} \times \mathcal{R} \times \mathcal{S} \times \mathcal{A} \to \left[ 0,1 \right]$ describes the environment dynamics, and $0 \leq \gamma \leq 1$ is a discount factor. At each timestep $t$, the agent observes the environment state $\bm{s}_t$, and decides on an action $a_t$. This decision is sampled from the current policy $\pi: \mathcal{S} \times  \mathcal{A} \to \left[ 0,1 \right]$, which defines a \gls{pdf} over all possible actions $a$ for a given state $\bm{s}$. The selected action is executed, and the agent receives a scalar reward $r_t \in \mathcal{R}$, after which the environment transitions to state $s_{t+1}$, following its dynamics $p$. The objective is to learn a policy, which maximizes the (discounted) return $G_t := \sum_{t'=t}^{H-1} \gamma^{t'-t} \cdot r_{t'}$ for some horizon $H<\infty$~\cite{Sutton_2018}.

Our work follows the hybrid \gls{qpg} algorithm proposed by \citeauthor{Jerbi_2021}. The approach is inspired by the classical \texttt{REINFORCE} idea with function approximation~\cite{Sutton_1999}, also referred to as \emph{vanilla policy gradient}. Its centerpiece is the parameterized policy $\pi_{\Theta}: \mathcal{S} \times \mathcal{A} \mapsto \left[ 0,1 \right]$, where $\Theta$ denotes the trainable parameters of the function approximator. The parameters are updated with a gradient ascent technique, i.e., $\Theta \gets \Theta + \alpha \cdot \nabla_{\Theta} J(\Theta)$, with learning rate $\alpha$, and some scalar performance measure $J(\Theta)$. The \emph{policy gradient theorem}~\cite{Sutton_1999} states the gradient of the performance measure as
\begin{equation}
    \label{eq:policy_gradient}
    \nabla_{\Theta} J(\Theta) = \mathbb{E}_{\pi_{\Theta}} \left[ \sum_{t=0}^{H-1} \nabla_{\Theta} \ln \pi_{\Theta} (a_t \mid \bm{s}_t) \cdot G_t \right].
\end{equation}
\noindent In practice, the expectation value in \cref{eq:policy_gradient} is approximated by averaging over several trajectories $\tau$ (i.e., sequences of current state, executed action, and received reward for some timesteps), that are generated by following the current policy $\pi_{\Theta}$. For a \gls{dnn}, the gradient of the log-policy w.r.t.\ the parameters can be obtained using backpropagation~\cite{Rumelhart_1986}. For \glspl{vqc} executed on quantum hardware one typically resorts to the parameter-shift rule~\cite{Mitarai_2018,Schuld_2019} -- as in this paper -- or simultaneous perturbation stochastic approximations (SPSA)~\cite{Spall_1998,Wiedmann_2023}.

\subsection{\label{subsec:architecture}VQC-Model Architecture}

We use a \gls{vqc} with a subsequent measurement as a replacement for the \gls{dnn}, which usually approximates the policy in deep \gls{rl}~\cite{Jerbi_2021}. The \gls{vqc} acts on $\ket{0}^{\otimes n}$ (where $\ket{0}$ denotes the $1$-qubit computational zero state) with the unitary $U_{\bm{s},\bm{\lambda},\bm{\theta}}$, which prepares the quantum state $\ket{\psi_{\bm{s},\bm{\lambda},\bm{\theta}}}$. We introduce only the fundamentals of \gls{qc} and refer the interested reader to \citeauthor{Nielsen_2010} for more details.

Similar to \citeauthor{Jerbi_2021} we use the hardware-efficient ansatz in \cref{fig:vqc}. Besides the variational parameters $\bm{\theta}$, there are also trainable scaling parameters $\bm{\lambda}$ to enhance the expressivity of the model. For ease of notation we denote $\Theta = (\bm{\theta}, \bm{\lambda})$.

Extracting classical information from the quantum state $\ket{\psi_{\bm{s},\Theta}}$ is crucial when using \glspl{vqc} in a hybrid algorithm. Usually, we measure some Hermitian operator $O$ to estimate the expectation $\expval{O}_{\bm{s},\Theta}$.
For a projective measurement and spectral decomposition $O = \sum_{i} \mu_i \ket{v_i} \bra{v_i}$, the post-measurement state corresponds to one of the eigenstates $\ket{v_i}$, and we observe the corresponding eigenvalue $\mu_i$ with probability $\expval{\psi_{\bm{s},\Theta} \mid v_i} \expval{v_i \mid \psi_{\bm{s},\Theta}}$.
We restrict our considerations to measuring only $1$-qubit Pauli observables $\sigma \in \left\{ X, Y, Z \right\}$.

\subsection{\label{subsec:policies}Reformulation of the \texttt{RAW}-VQC Policy}

\citeauthor{Jerbi_2021} define the \texttt{RAW}-\gls{vqc} and the \texttt{SOFTMAX}-\gls{vqc} and suggest, that the latter formulation is superior in terms of \gls{rl} performance. However, we argue in \cref{sec:restricted_softmax}, that is has several drawbacks w.r.t.\ circuit sampling complexity. We experimentally demonstrate in \cref{sec:experiments}, that the \gls{rl} performance of an improved version of the \texttt{RAW}-\gls{vqc} policy is competitive. As the original definition is impractical for the upcoming discussions, we introduce a slight reformulation:
\begin{definition}[\texttt{RAW}-\gls{vqc}] \label{def:rawvqc}
	Given a \gls{vqc} acting on $n$ qubits, taking as input an \gls{rl} state $\bm{s} \in \mathbb{R}^n$, rotation angles $\bm{\theta} \in \left[ -\pi, \pi \right]^{\abs{\bm{\theta}}}$, scaling parameters $\bm{\lambda} \in \mathbb{R}^{\abs{\bm{\lambda}}}$, with $\Theta = (\bm{\theta},\bm{\lambda})$, such that it produces the quantum state $\ket{\psi_{\bm{s},\Theta}} = U_{\bm{s},\Theta}\ket{0}^{\otimes n}$, we define the \texttt{RAW}-\gls{vqc} policy:
	\begin{equation} \label{eq:rawvqc}
		\pi_{\Theta}(a \mid \bm{s}) = \expval{P_a}_{\bm{s},\Theta},
	\end{equation}
	where $\expval{P_a}_{\bm{s},\Theta} = \expval{ \psi_{\bm{s},\Theta} \left| P_a \right| \psi_{\bm{s},\Theta} }$ is the expectation value of a projector $P_a$. It must hold $P_a = \sum_{\ket{v} \in \mathcal{V}_a} \ket{v} \bra{v}$, with $\mathcal{V}_a \subseteq \mathcal{V}$, where $\mathcal{V} = \left\{ \ket{v_{0}}, \ket{v_{1}}, \cdots, \ket{v_{2^{n}-1}} \right\}$ is the set of eigenstates of an observable
    \begin{equation}
        \label{eq:spectral_decomposition}
	    O = \sum_{i=0}^{2^{n}-1} i \cdot \ket{v_i} \bra{v_i}.
	\end{equation}
    It must hold $\bigcup_{a \in \mathcal{A}} \mathcal{V}_a = \mathcal{V}$, and $\mathcal{V}_i \cap \mathcal{V}_j = \emptyset$ for all $i \neq j$.
\end{definition}

To be concise, the given reformulation is slightly more restrictive than the original one, due to the explicit designation of the eigenvalues in \cref{eq:spectral_decomposition}. This ensures, that by measuring eigenvalue $i$, we can directly conclude the post-measurement state to be $\ket{v_i}$. However, \cref{def:rawvqc} is completely equivalent in terms of all considerations and experiments carried out in \citeauthor{Jerbi_2021}.


\section{\label{sec:theory}Analysis of Action Decoding}

We now focus on the action decoding scheme of the \texttt{RAW}-\gls{vqc} policy. \cref{def:rawvqc} is instantiated such that the observable in \cref{eq:spectral_decomposition} can be efficiently replaced with only $1$-qubit Pauli operators. We start by measuring a $1$-qubit Pauli observable $\sigma_{n-1}$ on the uppermost qubit. The observed result is one of the eigenvalues $\mu_0 = +1$ or $\mu_1 = -1$, which is interpreted as the bit value $b_{n-1} = \frac{1-\mu}{2}$. The measured qubit is in the corresponding post-measurement state, while the other qubits have not been touched thus far. Now, we measure the Pauli observable $\sigma_{n-2}$ on the next to uppermost qubit, and proceed this way until we have measured all the qubits. As all the $1$-qubit Pauli observables on different qubits commute, the successive projections can be collected into one overall projection onto the respective basis state. The combined measurement result is the bitstring $b_{n-1} b_{n-2} \cdots b_{0}$, which is the binary expansion of $i$ in \cref{eq:spectral_decomposition}. We follow the convention that the most significant bit corresponds to the uppermost wire of \cref{fig:vqc}.

\subsection{\label{subsec:partitioning}Partitioning of Computational Basis States}

Since measurements can typically only be done in the energy eigenbasis of a qubit, we select the Pauli operators $\sigma_i$ to be all Pauli-$Z$ observables. This boils down to a measurement in the \emph{computational basis}, which for an $n$-qubit system is given by $\mathcal{V} = \lbrace \ket{0 \cdots 00}, \ket{0 \cdots 01}, \cdots, \ket{1 \cdots 11} \rbrace$. However, our techniques can also be applied to more general combinations of Pauli operators, as all share eigenvalues $+1$ and $-1$. Following \cref{def:rawvqc} this set has to be partitioned, i.e., $\mathcal{V}_a = \lbrace \ket{a_0}, \ket{a_1}, \cdots \rbrace \subseteq \mathcal{V}$ for action $a$. Let the prepared state be represented in the computational basis (using decimal notation) as $\ket{\psi_{\bm{s},\Theta}} = c_0 \ket{0} + c_1 \ket{1} + \cdots + c_{N-1} \ket{N-1}$, with $N = 2^n$. This allows the reformulation of \cref{eq:rawvqc} in terms of the absolute squared amplitudes of the prepared quantum state:
\begin{align}
	\pi_{\Theta} \left( a \mid \bm{s} \right) &= \expval{\psi_{\bm{s},\Theta} \left| \sum_{\ket{v} \in \mathcal{V}_a} \ket{v} \bra{v} \right| \psi_{\bm{s},\Theta}} \\
        &= \sum_{\ket{v} \in \mathcal{V}_a} \abs{c_v}^2.
        \label{eq:action_decoding_amplitudes}
\end{align}

Consequently, (as only Pauli observables are considered) it is sufficient to sum up the absolute squared amplitudes associated with the respective basis states to determine the policy. On quantum hardware it is possible to estimate the absolute squared value by executing the experiment multiple times. It corresponds to the probability of observing the eigenvalues associated with the respective basis states. With Pauli-$Z$ observables, measuring an eigenvalue of $i$ (which happens with probability $\abs{c_i}^2$) indicates the post-measurement state to be $\ket{i}$.  In practice the measurement result is the binary expansion of $i$, i.e., the bitstring $b_{n-1} b_{n-2} \cdots b_{0}$.

For interacting with the environment, the \gls{rl} agent selects an action according to the current policy $\pi_{\Theta}(a|s)$. Starting from \cref{eq:action_decoding_amplitudes}, it holds $\sum_{a \in \mathcal{A}} \sum_{\ket{v} \in \mathcal{V}_a} \abs{c_v}^2 = 1$. As all individual summands are non-negative, this defines a \glsentrylong{pdf}. Hence, it is sufficient to only measure the quantum state once. The agent decides for an action, based on which partition $\mathcal{V}_a$ the post-measurement state is contained within. For the parameter update, we must obtain $\nabla_{\Theta} \ln \pi_{\Theta}(a|\bm{s}) = \nabla_{\Theta} \expval{P_a}_{\bm{s},\Theta} / \expval{P_a}_{\bm{s},\Theta}$, for a trajectory of concrete instances of $\bm{s}$ and $a$. To estimate $\expval{P_a}_{\bm{s},\Theta}$ (and also $\nabla_{\Theta} \expval{P_a}_{\bm{s},\Theta}$), we need to determine for each post-measurement state $\ket{i}$, if it is an element of $\mathcal{V}_a$.

There is a caveat with the explicit representation of basis state partitionings~\cite{Jerbi_2021}. For larger systems, storing all $\mathcal{V}_a$ is infeasible, as the number of elements scales exponentially in $n$, independently of quantum or classical hardware (for $n=64$ qubits, there are $2^{64}$ possible bit strings, which would require $2^{64} \cdot \log_2(64)~ \text{bit} \approx 147.6 ~ \text{exabyte}$ of storage space). A solution would be to use a classically computable post-processing function based on the measurement outputs. One expects that the \gls{rl} performance strongly depends on the choice of post-processing function. In the sequel we introduce a measure of globality for post-processing functions and provide strong evidence that it correlates with \gls{rl} performance.

\subsection{\label{subsec:globality_measure}Action Decoding with Classical Post-Processing Function}

We denote the set of all bitstrings as $\bm{b} = b_{n-1}b_{n-2}\cdots b_1 b_0$ by $\mathcal{C}$, and partition it into disjoint, action-associated sets $\mathcal{C}_a$. We define a \emph{classical post-processing function} $f_{\mathcal{C}}: \lbrace 0,1 \rbrace^n \to \lbrace 0, 1, \cdots, \abs{\mathcal{A}}-1 \rbrace$, such that $f_{\mathcal{C}}(\bm{b}) = a$, iff $\bm{b} \in \mathcal{C}_a$ for an partitioning of $\mathcal{C}$. We can reformulate \cref{eq:action_decoding_amplitudes}:
\begin{align}
    \pi_{\Theta} (a \mid \bm{s}) =& \sum_{\bm{b} \in \left\{ 0,1 \right\}^n}^{f_{\mathcal{C}}(\bm{b})=a} \expval{ \psi_{\bm{s},\Theta} \mid \bm{b}} \expval{ \bm{b} \mid \psi_{\bm{s},\Theta}}\\
    \approx & \frac{1}{K} \cdot \sum_{k=0}^{K-1} \delta_{f_{\mathcal{C}} \left( \bm{b}^{(k)} \right) = a} \label{eq:classical_post_processing}
\end{align}
where $K \gg 1$ is the number of shots for estimating the expectation value (\cref{eq:classical_post_processing} becomes exact for $K \to \infty$), $\bm{b}^{(k)}$ is the bitstring observed in the $k$-th shot, and $\delta$ is an indicator function.


\subsubsection{\label{subsubsec:globality_measure}Extracted Information Defines Globality Measure}

In order to derive a quality measure for a specific post-processing function $f_\mathcal{C}$, we first define the notion of \emph{extracted information} for an observed bitstring $\bm{b}$:

\begin{definition}[extracted information]
    \label{def:extracted_info}
	Let $f_{\mathcal{C}}$ be a classical post-processing function, with a partitioning of the set of $n$-bit strings $\mathcal{C} = \bigcup_a \mathcal{C}_a$, for which $\mathcal{C}_i \cap \mathcal{C}_j = \emptyset$ for all $i \neq j$. Furthermore, $\bm{b} = b_{n-1}b_{n-2}\cdots b_1 b_0$ denotes an arbitrary bitstring. The extracted information $EI_{f_\mathcal{C}}(\bm{b}) \in \mathbb{N}$ is the minimum number of bits $b_i$ necessary, to compute $f_{\mathcal{C}}(\bm{b})$, i.e.\ assign $\bm{b}$ unambiguously to a set $\mathcal{C}_a$.
\end{definition}

An example of a valid partitioning associated with a $4$-qubit system and  $\abs{\mathcal{A}}=4$ is
\begin{eqnarray}
    \mathcal{C}_{a=0} &=& \left\{ 0000, ~0010, ~0100, ~0110 \right\}, \\
    \mathcal{C}_{a=1} &=& \left\{ 0001, ~0011, ~0101, ~0111 \right\}, \\
    \mathcal{C}_{a=2} &=& \left\{ 1000, ~1010, ~1101, ~1111 \right\}, \\
    \mathcal{C}_{a=3} &=& \left\{ 1001, ~1011, ~1100, ~1110 \right\}.
\end{eqnarray}
For assigning the bitstring $0111$ unambiguously to the correct set, i.e. $\mathcal{C}_{a=1}$, it is enough to consider only the first and the last bit. It is straightforward to see that it cannot work with less information. As all partitions contain the same number of elements, and we need to choose between $4$ actions, it requires at least $\log_2 (4) = 2$ bits of information. Therefore, the extracted information is given by $EI_{f_\mathcal{C}}(0111) = 2$.

To grant more expressivity to the defined measure, we average the extracted information over all possible bitstrings to get the \emph{globality measure}
\begin{equation}
    \label{eq:globality_measure}
	G_{f_\mathcal{C}} := \frac{1}{2^n} \sum_{\bm{b} \in \left\{ 0,1 \right\}^n} EI_{f_\mathcal{C}}(\bm{b}).
\end{equation}
The value of this measure describes the average amount of information (in bits) necessary to have an unambiguous distinction between the different actions. The concept is inspired by the reformulation of special policies using local and global observables in \cref{sec:local_global}.

A lower bound to $G_{f_{\mathcal{C}}}$ is intuitively given by $G_{f_{\mathcal{C}}} \geq \log_2 \left( \abs{\mathcal{A}} \right)$. The measure is trivially upper bounded by $G_{f_{\mathcal{C}}} \leq n$, which is in line with Holevo's theorem~\cite{Holevo_1973,Nielsen_2010}, in that no more than $n$ bits of classical information can be extracted from a $n$-qubit system in a single measurement. Evaluating \cref{eq:globality_measure} for the example above gives $G_{f_\mathcal{C}} = 2.5$, i.e., on average $2.5$ bits of information are necessary (see \cref{sec:example_ei} for the exact computation).

Evaluating \cref{eq:globality_measure} explicitly is infeasible for large $n$, as it requires averaging over $2^n$ elements. Furthermore, we are not aware of an efficient routine that determines the extracted information for an arbitrary bitstring. Nonetheless, one can define a post-processing function, which has maximal globality according to \cref{eq:globality_measure}. We discuss this construction in the next section.

\subsubsection{\label{subsubsec:optimal_partitioning}Constructing an Optimal Post-Processing Function}

As we demonstrate in \cref{sec:experiments}, the value of the introduced globality measure is strongly correlated with the \gls{rl} performance. It is not feasible to construct a post-processing function with optimal globality measure using a brute-force approach, as we argue in \cref{subsec:search_hard}. To circumvent this caveat, we construct an implicit partitioning $\mathcal{C}$, that gives rise to a post-processing function with provably optimal globality $G_{f_{\mathcal{C}}} = n$. The set of bitstrings $\bm{b} = b_{n-1} \cdots b_0$ associated with action $a$ is recursively defined as
\begin{equation}
    \label{eq:optimal_partitioning_recursive}
	\mathcal{C}_{[a]_2}^{(m)} = \left\{ \bm{b} \mid \bigoplus_{i=m}^{n-1} b_i = a_0 \wedge \bm{b} \in \mathcal{C}_{a_m \cdots a_2 \left( a_1 \oplus a_0 \right)}^{(m-1)} \right\}
\end{equation}
where $m = \log_2(M) - 1$ (with $M := \abs{\mathcal{A}}$) and $\left[ a \right]_2 = a_{m} \cdots a_{0}$ is the binary expansion of $a$. The base cases use a binary parity function on all bits:
\begin{align}
    \mathcal{C}_{[0]_2}^{(0)} =& \left\{ \bm{b} \mid \bigoplus_{i=0}^{n-1} b_i = 0 \wedge \bm{b} \in \left\{ 0, 1 \right\}^n \right\} \label{eq:optimal_partitioning_basis_0}\\
	\mathcal{C}_{[1]_2}^{(0)} =& \left\{ \bm{b} \mid \bigoplus_{i=0}^{n-1} b_i = 1 \wedge \bm{b} \in \left\{ 0, 1 \right\}^n \right\} \label{eq:optimal_partitioning_basis_1}
\end{align}

\noindent The construction in \cref{eq:optimal_partitioning_recursive} thus recursively splits \cref{eq:optimal_partitioning_basis_0,eq:optimal_partitioning_basis_1} by computing parity values of sub-strings, until the required number of groups is formed.

\begin{lemma}
    \label{lem:optimal_partitioning}
	Let an arbitrary \gls{vqc} act on an $n$-qubit state. The \texttt{RAW}-\gls{vqc} policy needs to distinguish between $M := \abs{\mathcal{A}}$ actions, where $M$ is a power of $2$, i.e., $m = \log_2(M) - 1 \in \mathbb{N}_{0}$. Using \cref{eq:optimal_partitioning_recursive,eq:optimal_partitioning_basis_0,eq:optimal_partitioning_basis_1} we define
    \begin{align}
		\pi_{\Theta}^{\text{glob}} \left( a \mid \bm{s} \right) &= \sum_{v \in \mathcal{C}_{[a]_2}^{(m)}} \expval{\psi_{\bm{s},\Theta} \mid v} \expval{v \mid \psi_{\bm{s},\Theta}} \\
    &\approx \frac{1}{K} \sum_{k=0}^{K-1} \delta_{f_{\mathcal{C}^{(m)}}(\bm{b}^{(k)}) = a}\label{eq:optimal_postprocessing}
	\end{align}
    where $K$ is the number of shots for estimating the expectation value, $\bm{b}^{(k)}$ is the bitstring observed in the $k$-th shot, and $\delta$ is an indicator function. The post-processing function is guaranteed to have the globality value $G_{f_{\mathcal{C}}} = n$.
\end{lemma}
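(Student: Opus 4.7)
The plan is to establish a closed-form description of the recursively defined partitioning and then show that every single bit of an observed bitstring is required to determine the associated action, so that $EI_{f_{\mathcal{C}^{(m)}}}(\bm{b}) = n$ uniformly and the average in \cref{eq:globality_measure} saturates at $n$.

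First, I would prove by induction on $m$ that $\bm{b} \in \mathcal{C}_{[a]_2}^{(m)}$ if and only if $a_k = b_{m-k}$ for $k = 1, \ldots, m$, together with $a_0 = \bigoplus_{i=m}^{n-1} b_i$. The base case $m = 0$ is immediate from \cref{eq:optimal_partitioning_basis_0,eq:optimal_partitioning_basis_1}. For the inductive step, apply the hypothesis at level $m-1$ to the shifted label $a_m a_{m-1} \cdots a_2 (a_1 \oplus a_0)$ appearing in \cref{eq:optimal_partitioning_recursive}: this yields $a_k = b_{m-k}$ for $k = 2, \ldots, m$ together with the relation $a_1 \oplus a_0 = \bigoplus_{i=m-1}^{n-1} b_i$. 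Combining the latter with the level-$m$ constraint $\bigoplus_{i=m}^{n-1} b_i = a_0$ gives $a_1 = b_{m-1}$, closing the induction. As a by-product, the closed form shows that the sets $\mathcal{C}_{[a]_2}^{(m)}$ are pairwise disjoint and cover $\{0,1\}^n$, so $f_{\mathcal{C}^{(m)}}$ is a legitimate post-processing function in the sense of \cref{subsec:globality_measure}.

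Armed with the closed form, the globality bound is short. Fix any $\bm{b}$ and any index $j \in \{0, \ldots, n-1\}$, and consider the bitstring $\bm{b}'$ obtained by flipping $b_j$. If $j < m$, the closed form reads $a_{m-j} = b_j$ directly, so $f_{\mathcal{C}^{(m)}}(\bm{b}')$ differs from $f_{\mathcal{C}^{(m)}}(\bm{b})$ in the $(m-j)$-th action bit. If $j \geq m$, then $b_j$ appears in the XOR defining $a_0$, so flipping $b_j$ flips $a_0$. In either case the action is altered, which means no subset of at most $n-1$ bits can determine $f_{\mathcal{C}^{(m)}}(\bm{b})$ unambiguously. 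Hence $EI_{f_{\mathcal{C}^{(m)}}}(\bm{b}) \geq n$, and since trivially $EI \leq n$ by \cref{def:extracted_info}, equality holds. Averaging via \cref{eq:globality_measure} then gives $G_{f_{\mathcal{C}^{(m)}}} = n$.

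The only genuinely non-routine step is the bookkeeping in the inductive decomposition, in particular correctly propagating the shift $a'_j = a_{j+1}$ and disentangling the combined label $(a_1 \oplus a_0)$ from the lower-level parity to isolate $a_1 = b_{m-1}$. Once the closed form is available, the implication that each $b_j$ is individually decisive follows essentially by inspection, and the Holevo-style upper bound $G_{f_{\mathcal{C}^{(m)}}} \leq n$ noted in \cref{subsubsec:globality_measure} is matched.
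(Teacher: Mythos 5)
Your proof is correct. It differs from the paper's in a useful way: the paper proves the lemma by inducting directly on the ``maximum globality'' property of the sets $\mathcal{C}^{(m)}_{[a]_2}$ (base case $m=0$ being the full parity function, induction step arguing that the split in \cref{eq:optimal_partitioning_recursive} forces both the $m$ low bits and the parity of the remaining $n-m$ bits to be consulted), and it states the closed form \cref{eq:optimal_post-processing_function} separately, ``by construction,'' without proof. You instead make the closed form $a_k=b_{m-k}$ ($k=1,\dots,m$), $a_0=\bigoplus_{i=m}^{n-1}b_i$ the object of the induction --- your bookkeeping of the label shift and of $a_1\oplus a_0$ versus $\bigoplus_{i=m-1}^{n-1}b_i$ is right and matches \cref{eq:optimal_post-processing_function} --- and then obtain $G_{f_{\mathcal{C}}}=n$ in a single non-inductive step via bit sensitivity: flipping any one bit of $\bm{b}$ changes the assigned action, so no $n-1$ bit positions can disambiguate, hence $EI_{f_{\mathcal{C}}}(\bm{b})=n$ for every $\bm{b}$. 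This buys two things the paper's argument leaves implicit: a proof of \cref{eq:optimal_post-processing_function} itself (and, as you note, of the fact that the $\mathcal{C}^{(m)}_{[a]_2}$ really partition $\{0,1\}^n$ as \cref{def:rawvqc} requires), and a fully rigorous justification of the lower bound $EI\geq n$, which in the paper rests on the informal claim that the induction hypothesis ``directly transfers'' the globality property. The trade-off is that the paper's version stays closer to the local/global-observable intuition of \cref{sec:local_global}, while yours is the cleaner argument to check.
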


The proof is deferred to \cref{sec:supplementary_optimal_partitioning}. This post-processing function defines the proposed \gls{qpg} algorithm in \cref{fig:overview}.

\begin{figure*}
    \centering
    \begin{tikzpicture}
        \centering
        \begin{axis}[
        	xlabel=episode,
        	ylabel=average collected reward,
        	grid=both,
        	xmin=-30,xmax=1030,
            ymin=0,ymax=210,
            tick label style={font=\footnotesize},
            xtick={0,100,200,300,400,500,600,700,800,900,1000},
        	width=0.9\linewidth,
        	height=.25\textheight,
        	axis x line=bottom, axis y line=left, tick align = outside,
            legend columns=1,
        	legend style={/tikz/every even column/.append style={column sep=0.1cm, row sep=0.1cm},at={(1.13,0.6)},anchor=east,yshift=-5mm,font=\footnotesize},
        	no marks]
        	\addplot[line width=.5pt,solid,color=darkgreen!40] %
            	table[x=episode,y=glob,col sep=comma]{figures/data/data_partitioning.csv};
            \addplot[line width=.5pt,solid,color=blue!40] %
            	table[x=episode,y=loc35,col sep=comma]{figures/data/data_partitioning.csv};
            \addplot[line width=.5pt,solid,color=red!40] %
            	table[x=episode,y=loc3,col sep=comma]{figures/data/data_partitioning.csv};
            \addplot[line width=.5pt,solid,color=orange!40] %
            	table[x=episode,y=loc2,col sep=comma]{figures/data/data_partitioning.csv};
            \addplot[line width=.5pt,solid,color=violet!40] %
            	table[x=episode,y=loc1,col sep=comma]{figures/data/data_partitioning.csv};
            \addplot[line width=.5pt,solid,color=gray!40] %
            	table[x=episode,y=sm,col sep=comma]{figures/data/data_partitioning.csv};
            \addplot[line width=1pt,solid,color=darkgreen] %
            	table[x=episode,y=glob-avg,col sep=comma]{figures/data/data_partitioning.csv};
            \addplot[line width=1pt,solid,color=blue] %
            	table[x=episode,y=loc35-avg,col sep=comma]{figures/data/data_partitioning.csv};
            \addplot[line width=1pt,solid,color=red] %
            	table[x=episode,y=loc3-avg,col sep=comma]{figures/data/data_partitioning.csv};
            \addplot[line width=1pt,solid,color=orange] %
            	table[x=episode,y=loc2-avg,col sep=comma]{figures/data/data_partitioning.csv};
            \addplot[line width=1pt,solid,color=violet] %
            	table[x=episode,y=loc1-avg,col sep=comma]{figures/data/data_partitioning.csv};
            \addplot[line width=1pt,solid,color=gray] %
            	table[x=episode,y=sm-avg,col sep=comma]{figures/data/data_partitioning.csv};
            \addplot[color=black, domain=-25:1024, line width=1pt] {200.0};
            \addplot[color=white, domain=-25:1024, line width=5.5pt] {206.0};
            \addplot[dashed, color=black, domain=-25:1024, line width=1pt] {23.5};
            \legend{,,,,,,$\pi^{G_{f_{\mathcal{C}}}}_{=4}$,$\pi^{G_{f_{\mathcal{C}}}}_{=3.5}$,$\pi^{G_{f_{\mathcal{C}}}}_{=3}$,$\pi^{G_{f_{\mathcal{C}}}}_{=2}$,$\pi^{G_{f_{\mathcal{C}}}}_{=1}$,$\pi^{sm}$}
        \end{axis}
    \end{tikzpicture}
    \vspace{-3mm}
    \caption{\label{fig:partitioning}\gls{rl} training performance of \texttt{RAW}-\gls{vqc} policies with different post-processing functions on the \texttt{CartPole-v0} environment, averaged over $10$ independent runs and $20$ preceding timesteps (dark curves). A higher globality value correlates with a faster convergence to the optimal collected reward of $200$. For comparison, the performance of a \texttt{SOFTMAX}-\gls{vqc} is included as $\pi^{sm}$.}
\end{figure*}
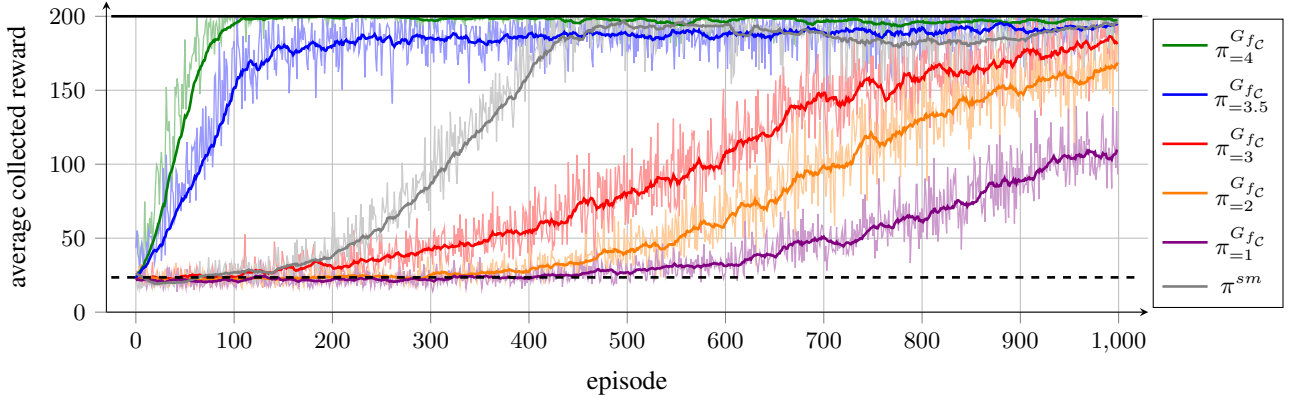

By construction of the recursive definition the action associated with a specific bitstring $f_{\mathcal{C}}(\bm{b})$ is given as
\begin{equation}
    \label{eq:optimal_post-processing_function}
    f_{\mathcal{C}}(\bm{b}) = \left[ b_0 \cdots b_{m-1} \left( \bigoplus_{i=m}^{n-1} b_i \right) \right]_{10}
\end{equation}
where $[\cdot]_{10}$ denotes the decimal representation of the respective bitstring. This representation does not require storing the partitioning $\mathcal{C}$ explicitly, which renders our approach feasible for large system sizes and action spaces.


\section{\label{sec:experiments}Experiments}

Our framework realizes different post-processing functions, associated with various globality measure values. The implementation is based upon the \texttt{qiskit} and \texttt{qiskit\_machine\_learning} libraries. If not stated differently, all experiments use the \texttt{StatevectorSimulator}, which assumes the absence of noise, and also eliminates sampling errors. The computations were executed on a CPU-cluster with $64$ nodes, each equipped with $4$ cores and $32$ GB of working memory. The experiments in \cref{subsec:exp_results,subsec:exp_fim} focus on the \texttt{CartPole} environment, while \cref{subsec:quantum_hardware,sec:experiments_alternated_setup} also consider \texttt{ContextualBandits} and \texttt{FrozenLake}, respectively. We establish conventions regarding the experimental setup and reproducibility in \cref{sec:experiments_alternated_setup}.

\subsection{\label{subsec:exp_results}RL Performance vs. Globality Measure}

The main experiment is conducted on the \texttt{CartPole-v0} environment~\cite{Brockman_2016} with a horizon of $200$ time-steps. The state space has a dimensionality of $4$, with all values scaled to be within $\left[ -1, 1 \right)$. The agent can take two actions, i.e., $\abs{\mathcal{A}}=2$.

All experiments in \cref{fig:partitioning} (apart from the \texttt{SOFTMAX}-\gls{vqc} policy $\pi^{sm}$, which uses a tensored Pauli-$Z$ measurement on all qubits~\cite{Jerbi_2021}) use the same architecture, only the post-processing function is modified. The two extreme cases are $G_{f_{\mathcal{C}}}=4.0$, constructed following \cref{lem:optimal_partitioning}, and $G_{f_{\mathcal{C}}}=1.0$, which extracts the lowest amount of information that is sufficient. We also experiment with $G_{f_{\mathcal{C}}}=2.0$ and $G_{f_{\mathcal{C}}}=3.0$, both of which can be expressed as a parity measurement on $2$ or $3$ qubits, respectively. A special case is $G_{f_{\mathcal{C}}}=3.5$, where the explicit partitioning $\mathcal{C}_{a=0}^{\text{3.5}} = \left\{ 1, ~3, ~5, ~6, ~9, ~10, ~12, ~15 \right\}$ and $\mathcal{C}_{a=1}^{\text{3.5}} = \left\{ 0, ~2, ~4 , ~7, ~8, ~11, ~13, ~14 \right\}$ is used.

Throughout all considerations, the \gls{rl} performance clearly benefits from a higher globality value of the underlying post-processing function. The convergence speed is improved, for example, the strategy learned by an agent with an underlying global post-processing function reaches optimal behavior after just $100$ episodes. This is clearly delayed for all other configurations. In fact, the policy with $G_{f_{\mathcal{C}}}=1.0$ is not able to learn optimal behavior, even after $5,000$ episodes. This can be partially addressed with deeper circuits (see \cref{subsec:deeper_circuits}). However, as circuit depth is very critical for \gls{nisq} devices, using optimal classical post-processing functions is crucial.

\cref{fig:partitioning} also depicts the performance of a \texttt{SOFTMAX}-\gls{vqc} policy~\cite{Jerbi_2021}. Interestingly, it performs better than \texttt{RAW}-\gls{vqc} with a globality value $\leq 3$, but is clearly inferior to the two fastest converging setups.

It is important to mention, that this overall behavior cannot only be observed in this concrete setup and environment. We obtained comparable results on \texttt{CartPole-v1}, which extends the horizon to $500$ steps. Results on further environments, namely \texttt{FrozenLake} and \texttt{ContextualBandits}, are provided in \cref{subsec:other_environments}.

\subsection{\label{subsec:exp_fim}Analysis of Effective Dimension and Fisher Information Spectrum}

\pgfplotsset{scaled x ticks=false}
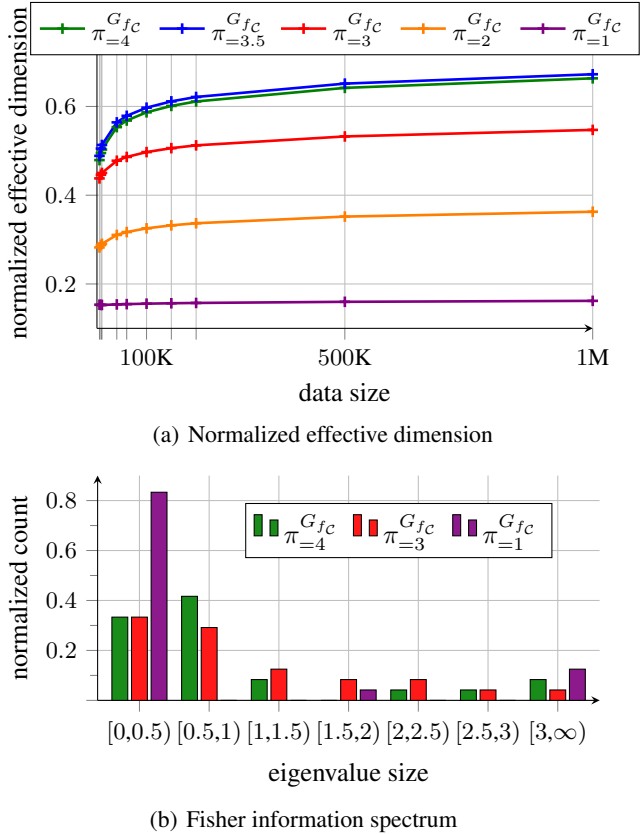
\begin{figure}[t]
    \centering
    \subfigure[Normalized effective dimension]{
        \centering
        \begin{tikzpicture}
        \begin{axis}[
                name=plot1,
        	xlabel={data size},
        	ylabel=normalized effective dimension,
        	grid=both,
        	xmin=-50,xmax=1000050,
                ymin=0.1,ymax=0.8,
                scaled ticks=false,
                tick label style={font=\footnotesize},
        	width=0.99\linewidth,
        	height=.25\textheight,
        	xtick={5000,8000,10000,40000,60000,100000,150000,200000,500000,1000000},
                xticklabels={,,,,,$100$K,,,$500$K,$1$M},
                ytick={0.2, 0.4, 0.6},
                yticklabels={$0.2$, $0.4$, $0.6$},
        	axis x line=bottom, axis y line=left, tick align = outside,
            legend columns=-1,
        	legend style={/tikz/every even column/.append style={column sep=0.1cm},at={(0.48,1.0)},anchor=south,yshift=-5mm,font=\footnotesize},
        	]
            \addplot[line width=1pt,solid,color=darkgreen,mark=+] %
            	table[x=data-size,y=g_4,col sep=comma]{figures/data/data_effdim.csv};
            \addlegendentry{$\pi^{G_{f_{\mathcal{C}}}}_{=4}$};
            \addplot[line width=1pt,solid,color=blue,mark=+] %
            	table[x=data-size,y=g_35,col sep=comma]{figures/data/data_effdim.csv};
            \addlegendentry{$\pi^{G_{f_{\mathcal{C}}}}_{=3.5}$};
            \addplot[line width=1pt,solid,color=red,mark=+] %
            	table[x=data-size,y=g_3,col sep=comma]{figures/data/data_effdim.csv};
            \addlegendentry{$\pi^{G_{f_{\mathcal{C}}}}_{=3}$};
            \addplot[line width=1pt,solid,color=orange,mark=+] %
            	table[x=data-size,y=g_2,col sep=comma]{figures/data/data_effdim.csv};
            \addlegendentry{$\pi^{G_{f_{\mathcal{C}}}}_{=2}$};
            \addplot[line width=1pt,solid,color=violet,mark=+] %
            	table[x=data-size,y=g_1,col sep=comma]{figures/data/data_effdim.csv};
            \addlegendentry{$\pi^{G_{f_{\mathcal{C}}}}_{=1}$};
        \end{axis}
        \end{tikzpicture}
        \label{subfig:effdim}
    }
    \\
    \subfigure[Fisher information spectrum]{
        \begin{tikzpicture}
            \begin{axis}[
            name=plot2,
            yshift={-0.5cm},
            width=\linewidth, height=.20\textheight,
            grid=major,
            tick label style={font=\footnotesize},
            minor y tick  num=1,
            xmin=0.2,xmax=3.8,
            ymin=0,ymax=0.9,
            xtick={0.5,1.0,1.5,2.0,2.5,3.0,3.5},
            xticklabels={$[0\text{,}0.5)$, $[0.5\text{,}1)$, $[1\text{,}1.5)$, $[1.5\text{,}2)$, $[2\text{,}2.5)$, $[2.5\text{,}3)$, $[3\text{,}\infty)$},
            ytick={0.0,0.2,0.4,0.6,0.8},
            yticklabels={,$0.2$, $0.4$, $0.6$, $0.8$},
            ylabel={normalized count},
            xlabel={eigenvalue size},
            ybar = .05cm,
            bar width = 6pt,
            axis x line=bottom, axis y line=left, tick align = outside,
            legend columns=-1,
            legend style={/tikz/every even column/.append style={column sep=0.1cm},at={(0.6,0.8)},anchor=south,yshift=-5mm}, 
            ]
            \addplot[fill=darkgreen!90,ybar,no marks,error bars/.cd, y dir=both, y explicit] coordinates {
                (0.5,8/24)
                (1.0,10/24)
                (1.5,2/24)
                (2.0,0)
                (2.5,1/24)
                (3.0,1/24)
                (3.5,2/24)
            };
            \addlegendentry{$\pi^{G_{f_{\mathcal{C}}}}_{=4}$};
            \addplot[fill=red!90,ybar,no marks,error bars/.cd, y dir=both, y explicit] coordinates {
                (0.5,8/24)
                (1.0,7/24)
                (1.5,3/24)
                (2.0,2/24)
                (2.5,2/24)
                (3.0,1/24)
                (3.5,1/24)
            };
            \addlegendentry{$\pi^{G_{f_{\mathcal{C}}}}_{=3}$};
            \addplot[fill=violet!90,ybar,no marks,error bars/.cd, y dir=both, y explicit] coordinates {
                (0.5,20/24)
                (1.0,0)
                (1.5,0)
                (2.0,1/24)
                (2.5,0)
                (3.0,0)
                (3.5,3/24)
            };
            \addlegendentry{$\pi^{G_{f_{\mathcal{C}}}}_{=1}$};
        \end{axis}
        \end{tikzpicture}
        \label{subfig:spectrum}
    }
\caption{\label{fig:effdim_spectrum} Quantities related to expressibility and trainability of a \gls{vqc}-based model. We estimate the \gls{fim} with $100$ random parameter sets for each of the $100$ random states $\bm{s}$. We draw the elements of each state from $\mathcal{N}\left( 0, 0.5 \right)$, which mimics the prior state distribution of the \texttt{CartPole} environment.}
\end{figure}

For every machine learning task, two crucial factors are the \emph{expressibility} and \emph{trainability} of the used model. Tools for quantitative analysis, based on the \glsentrylong{fim} (FIM), have recently been proposed by \citeauthor{Abbas_2021}, and a comparative study of various quantum neural network architectures has been conducted in \citeauthor{Wilkinson_2022}. An adaption of those concepts to the \gls{rl} setup is deferred to \cref{sec:intro_fim}.

The expressibility of a model can be quantified using the \emph{effective dimension}~\cite{Abbas_2021}, which describes the variety of functions that can be approximated. A normalized version of the effective dimension for different \texttt{RAW}-\gls{vqc} policy setups is compared in \cref{subfig:effdim}. The expressive power of the respective model is proportional to the globality of the underlying post-processing function and also the \gls{rl} performance from \cref{fig:partitioning}. The policies with $G_{f_{\mathcal{C}}}=4$ and $G_{f_{\mathcal{C}}}=3.5$ pose an exception, as the respective effective dimensions coincide. We considered this statistical variance, as also the \gls{rl} performance varies only slightly. While a more expressive circuit provides no guarantee of better performance, a complex problem needs a model that is expressive enough, promoting the usage of post-processing functions with high globality.

Insights into the trainability of a model are possible by considering the spectrum of the \gls{fim}~\cite{Abbas_2021}, which captures the geometry of the parameter space. Trainability profits from a uniform spectrum, while distorted spectra are suboptimal. \cref{subfig:spectrum} depicts the Fisher information spectrum for post-processing functions with $G_{f_{\mathcal{C}}}=4.0$, $G_{f_{\mathcal{C}}}=3.0$, and $G_{f_{\mathcal{C}}}=1.0$. There is no clear difference between the Fisher information spectra associated with the ones with higher globality. The difference to the least-global configuration is more significant, where most of the eigenvalues are close to $0$. This implies that the parameter space is flat in most dimensions, making optimization difficult. Additionally, there are a few large eigenvalues, indicating a distorted optimization space. In absolute terms, the spectra for $\pi^{G_{f_{\mathcal{C}}}}_{=4}$ and $\pi^{G_{f_{\mathcal{C}}}}_{=3}$ are not uniform. However, in comparison to $\pi^{G_{f_{\mathcal{C}}}}_{=1}$, the eigenvalues are much more uniformly distributed, and also fewer outliers exist. This property becomes more significant when considering larger system sizes and circuit depths, see \cref{sec:additional_analysis_fisher_spectrum}. Hence, the globality value associated with a model correlates at least to some extent with the uniformity of the Fisher information spectrum, which is beneficial for trainability.

\subsection{\label{subsec:quantum_hardware}Training on Quantum Hardware}

To emphasize the practical relevance of our method, we conducted a experiment on actual quantum hardware. There is work on \gls{vqc}-based \gls{rl} that performs the training on classical hardware and then uploads the learned parameters to quantum hardware for testing~\cite{Chen_2020,Hsiao_2022}, where the trained models model can replicate the learned behavior on the hardware to some extent. We take a more involved approach and perform both training and testing on quantum hardware. To the best of our knowledge, this is the first investigation of \gls{vqc}-based \gls{rl} on quantum hardware that also includes the training routine.

\begin{figure}
    \centering
    \begin{tikzpicture}
        \centering
        \begin{axis}[
        	xlabel=episode,
        	ylabel=expected reward,
        	grid=both,
        	xmin=-30,xmax=1530,
            ymin=-0.2,ymax=1.05,
            xtick={0, 500, 1000, 1500},
            tick label style={font=\footnotesize},
        	width=0.99\linewidth,
        	height=.25\textheight,
        	axis x line=bottom, axis y line=left, tick align = outside,
            legend columns=1,
        	legend style={/tikz/every even column/.append style={column sep=0.1cm},at={(0.64,0.17)},anchor=south,yshift=-5mm,font=\footnotesize},
        	no marks]
            \addplot[line width=.5pt,solid,color=darkgreen!40] %
            	table[x=episode,y=sim,col sep=comma]{figures/data/data_hardware_new.csv};
            \addplot[line width=.5pt,solid,color=orange!40] %
            	table[x=episode,y=hardware-mit,col sep=comma]{figures/data/data_hardware_new.csv};
            \addplot[line width=.5pt,solid,color=blue!40] %
            	table[x=episode,y=hardware,col sep=comma]{figures/data/data_hardware_new.csv};
            \addplot[line width=1pt,solid,color=darkgreen] %
            	table[x=episode,y=sim-avg,col sep=comma]{figures/data/data_hardware_new.csv};
            \addplot[line width=1pt,solid,color=orange] %
            	table[x=episode,y=hardware-mit-avg,col sep=comma]{figures/data/data_hardware_new.csv};
            \addplot[line width=1pt,solid,color=blue] %
            	table[x=episode,y=hardware-avg,col sep=comma]{figures/data/data_hardware_new.csv};
            \addplot[color=black, domain=-25:1524, line width=1pt] {1.0};
            \addplot[color=white, domain=-10:1524, line width=4.5pt] {1.03};
            \addplot[dashed, color=black, domain=-25:1524, line width=1pt] {0.0};
        \legend{,,,\texttt{StatevectorSimulator},\texttt{ibmq\_manila} (mitigated),\texttt{ibmq\_manila}}
        \end{axis}
    \end{tikzpicture}
    \vspace{-3mm}
    \caption{\label{fig:hardware}Training performance of a \texttt{RAW}-\gls{vqc} policy with maximum globality value on an $8$-state and $2$-action \texttt{ContextualBandits} environment. Two training runs are executed on quantum hardware (wo/ and w/ error mitigation).}
    
\end{figure}
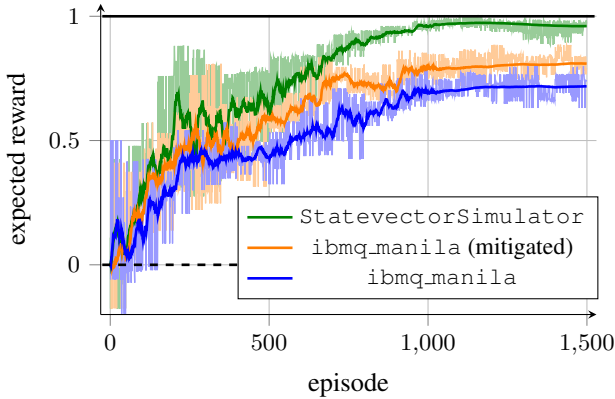

\begin{table}[t]
    \centering
    \caption{\label{tab:hardware_policy}Test results on the \texttt{ibmq\_manila} hardware device with error mitigation (orange curve in \cref{fig:hardware}).
    As for \texttt{ContextualBandits} the states are sampled uniformly at random, the optimal action is selected in approx. $92\%$ of all cases.\\}
        \begin{tabular}{c|cccc}
            \toprule
            & $s=0$ & $s=1$ & $s=2$ & $s=3$ \\
            \midrule
            optimal action & $a=0$ & $a=0$ & $a=1$ & $a=1$ \\
            $\pi(a=0 \mid s)$ & $\bm{0.93}$ & $\bm{0.91}$ & $0.09$ & $0.06$ \\
            $\pi(a=1 \mid s)$ & $0.07$ & $0.09$ & $\bm{0.91}$ & $\bm{0.94}$ \\
            \bottomrule
            \toprule
            & $s=4$ & $s=5$ & $s=6$ & $s=7$ \\
            \midrule
            optimal action & $a=0$ & $a=0$ & $a=1$ & $a=1$ \\
            $\pi(a=0 \mid s)$ & $\bm{0.94}$ & $\bm{0.91}$ & $0.09$ & $0.07$ \\
            $\pi(a=1 \mid s)$ & $0.06$ & $0.09$ & $\bm{0.91}$ & $\bm{0.93}$ \\
            \bottomrule
        \end{tabular}
\end{table}

We select an $8$-state \texttt{ContextualBandits} environment~\cite{Sutton_2018} for the experiment, which can be implemented with a $3$-qubit system. The employed hardware backend is the $5$-qubit device \texttt{ibmq\_manila v1.1.4}~\cite{IBMquantum_2023}. We slightly adapted the \gls{vqc} architecture and typical \gls{rl} feedback loop to make hardware usage feasible. First, we replaced the $CZ$ gates from \cref{fig:vqc} with $CX$ gates, due to the former one not being hardware-native (which would lead to decomposition and additional circuit complexity). Second, to reduce the number of hardware uploads, we used a batch size of $50$ trajectories. Consequently the gradients, and therefore also the parameter updates, are only computed for each $50$th time-step. Still, for a horizon of $1,500$ episodes, this adds up to overall $14,640$ expectation values that need to be estimated. With $1,024$ shots to estimate each one, close to $15$M circuits had to be evaluated per training run.


The training performance for different setups is displayed in \cref{fig:hardware}. We compare results on the hardware with and without matrix-free measurement error mitigation~\cite{Qiskit_2023,Nation_2021}. This is compared to results obtained from noise-free simulation on classical hardware. We also experimented with \texttt{qiskit} noise model instantiated with parameters sampled from the \texttt{ibmq\_manila} device -- the results were almost identical to the actual hardware. The noise-free simulation clearly produces the best results and is able to learn an basically optimal policy. While this is not the case for the experiments on hardware, there is still a clear improvement over the initial random policy. Hereby, as expected, the mitigated experiment (execution time approx. $360$ minutes over two \texttt{Qiskit Sessions}) improves upon the non-mitigated one (execution time approx. $150$ minutes in a single \texttt{Qiskit Session}). Interestingly, the performance of all three agents seems to saturate after about $1000$ episodes.

The testing results in \cref{tab:hardware_policy} clearly show that the (error-mitigated) hardware-trained agent is able to identify the optimal action for all $8$ states. The problem seems to be that the policy does not get ``peaky`` enough. We assume this is due to noise mainly induced by entangling gates $\varepsilon_{CX}$. While the original circuit uses only $15$ $CX$ gates, the transpiled versions average to about $27$, due to the sparse connectivity structure of the hardware device. It has to be noted that the overall length of the transpiled circuits stays approximately constant throughout all episodes. The re-calibration of the system after episode $750$ of the error-mitigated experiment (reducing $\varepsilon_{CX}$ from $0.70\%$ to $0.64\%$) did not cause a clear change in performance. We assume, that the convergence of the hardware agents towards a non-optimal expected reward is mainly caused by decoherence noise. To improve upon this, one can potentially use an architecture more adapted to the basis gate set and connectivity structure. Apart from that, more advanced error mitigation strategies~\cite{Giurgica_2020,Mari_2021} could be a suitable option.





\glsresetall

\section{Conclusion}

This paper analyzed the action decoding procedure of the \gls{qpg} algorithm originally proposed by \citeauthor{Jerbi_2021}. We proposed a hybrid routine combining measurements in the computational basis and a classical post-processing function. A newly developed globality measure for those functions showed a strong correlation with the \gls{rl} performance and model complexity measures. We provided a routine to implement a post-processing function that is optimal with respect to this measure -- which is also feasible for large action spaces. Compared to the original \texttt{RAW}-\gls{vqc}, as well as the \texttt{SOFTMAX}-\gls{vqc} policy~\cite{Jerbi_2021}, we achieve significant \gls{rl} performance improvements, with only negligible classical overhead. With this enhanced \gls{qpg} algorithm, we are able to execute the entire \gls{rl} training and testing routine on actual quantum hardware.

Our work focused on \gls{rl} routines, but in principle our findings can be extended to the realm of supervised and unsupervised learning. More concretely, the post-processing function for action selection can be reformulated to return the labels of a classification problem -- with reliable statements on transferability certainly requiring additional experiments. While we did not explicitly claim quantum advantage, the idea of constructing an environment based on the discrete logarithm from \citeauthor{Jerbi_2021} also holds for our approach.

\section*{Acknowledgements}

We wish to thank G.\ Wellein for his administrative and technical support of this work. The authors gratefully acknowledge the scientific support and HPC resources provided by the Erlangen National High Performance Computing Center (NHR@FAU) of the Friedrich-Alexander-Universit\"at Erlangen-N\"urnberg (FAU). The hardware is funded by the German Research Foundation (DFG).

We acknowledge the use of IBM Quantum services for this work. The views expressed are those of the authors, and do not reflect the official policy or position of IBM or the IBM Quantum team.

\textbf{Funding.} The research was supported by the Bavarian Ministry of Economic Affairs, Regional Development and Energy with funds from the Hightech Agenda Bayern via the project BayQS and by the Bavarian Ministry for Economic Affairs, Infrastructure, Transport and Technology through the Center for Analytics-Data-Applications (ADA Center) within the framework of “BAYERN DIGITAL II”.

M.\ Hartmann acknowledges support by the European Union’s Horizon 2020 research and innovation programme under grant agreement No 828826 ``Quromorphic'' and the Munich Quantum Valley, which is supported by the Bavarian state government with funds from the Hightech Agenda Bayern Plus.

\section*{Code Availability}

A repository with the framework to reproduce the main results of this paper is available at \url{https://gitlab.com/NicoMeyer/qpg_classicalpp}. Further information and data is available upon reasonable request.


\bibliography{paper}
\bibliographystyle{icml2023}

\newpage
\appendix

\onecolumn
\section{\label{sec:restricted_softmax}Caveats of the \texttt{SOFTMAX}-VQC Policy}

In \cref{subsec:policies} we have stated, that using the \texttt{SOFTMAX}-\gls{vqc} policy is problematic w.r.t.\ circuit sampling complexity. This was not explicitly addressed in \citeauthor{Jerbi_2021}, where experimental results suggest that the \texttt{SOFTMAX}-\gls{vqc} policy formulation exhibits clearly superior performance in some simple benchmark environments, compared to the \texttt{RAW}-\gls{vqc} policy. This is partially explained by the argument, that this formulation has better abilities in dealing with the balancing of exploration and exploitation. More concretely, this trade-off can be influenced by the inverse temperature parameter $\beta$ in the \texttt{SOFTMAX}-\gls{vqc} policy equation
\begin{equation}
    \label{eq:softmax}
    \pi_{\bm{\lambda},\bm{\theta}}(a \mid \bm{s}) = \frac{e^{\beta \expval{O_a}_{\bm{s},\bm{\lambda},\bm{\theta}}}}{\sum_{a' \in \mathcal{A}} e^{\beta \expval{O_{a'}}_{\bm{s},\bm{\lambda},\bm{\theta}}}},
\end{equation}
\noindent where $\expval{O_a}_{\bm{s},\bm{\lambda},\bm{\theta}} := \expval{ \psi_{\bm{s},\bm{\lambda},\bm{\theta}} \left| O_a \right| \psi_{\bm{s},\bm{\lambda},\bm{\theta}}}$, and $O_a$ is some action-dependent observable.

There are two parts of the \gls{qpg} pipeline, in which this formulation has an undesirable circuit sampling complexity. This is especially troublesome for the currently existing \emph{nisq} devices, as every execution of a quantum circuit exhibits considerable costs. First of all, the action selection following \cref{eq:softmax} requires the estimation of $\abs{\mathcal{A}}$ different expectation values (unlike for the \texttt{RAW}-\gls{vqc}, where a single measurement in the computational basis is sufficient). Secondly, also the approximation of the log-policy gradients scales linearly with the number of actions $\abs{\mathcal{A}}$, as a different observable $O_a$ is selected for each action $a$ (unlike for the \texttt{RAW}-\gls{vqc}, where the estimation of only one expectation value is sufficient). In order to avoid this dependence of the scaling on the size of the action space, most experiments in Jerbi et al.~\cite{Jerbi_2021} use a fixed observable $O$ for all actions, which gets multiplied with some action-dependent classical weight $w_a$. We formalize this approach in the following:
\begin{definition}[\texttt{RESTRICTED}-\texttt{SOFTMAX}-\gls{vqc}]
    \label{def:restricted_softmax}
	Given a \gls{vqc} acting on $n$ qubits, taking as input a state $\bm{s} \in \mathbb{R}^n$, rotation angles $\bm{\theta} \in \left[ -\pi, \pi \right]^{\abs{\bm{\theta}}}$, and scaling parameters $\bm{\lambda} \in \mathbb{R}^{\abs{\bm{\lambda}}}$, such that it produces the quantum state $\ket{\psi_{\bm{s},\bm{\theta},\bm{\lambda}}} = U_{\bm{s},\bm{\theta},\bm{\lambda}}\ket{0}^{\otimes n}$, we define its associated \texttt{RESTRICTED}-\texttt{SOFTMAX}-\gls{vqc} policy as:
	\begin{equation}
	    \label{eq:restricted_softmax}
		\pi_{\Theta}(a \mid \bm{s}) = \frac{e^{\beta w_a \expval{O}_{\bm{s},\bm{\lambda},\bm{\theta}}}}{\sum_{a'} e^{\beta w_{a'} \expval{O}_{\bm{s},\bm{\lambda},\bm{\theta}}}}
	\end{equation}
	where $\expval{O}_{\bm{s},\bm{\lambda},\bm{\theta}} = \expval{\psi_{\bm{s},\bm{\lambda},\bm{\theta}} \mid O \mid \Psi_{\bm{s},\bm{\lambda},\bm{\theta}}}$ is the expectation value of the observable $O$, $w_a$ is a weight parameter associated with action $a$, and $\beta \in \mathbb{R}$ is an inverse-temperature parameter. $\Theta = \left( \bm{\theta}, \bm{\lambda}, \bm{w} \right)$ constitute all the trainable parameters of this policy.
\end{definition}

For completeness, we demonstrate in \cref{subsec:reduction_complexity}, that the circuit sampling complexity of this simplified version is only constant in the number of actions. A serious restriction of the \gls{rl} performance caused by this simplification us derived in \cref{subsec:structural_restriction}. In \cref{subsec:implications_softmax}, we extract some implications for the original \texttt{SOFTMAX}-\gls{vqc} policy from the findings.

\subsection{\label{subsec:reduction_complexity}Circuit Sampling Complexity of \texttt{RESTRICTED}-\texttt{SOFTMAX}-VQC Policy}

It follows directly from \cref{eq:restricted_softmax}, that the action selection procedure requires the estimation of only a single expectation value, namely $\expval{O}_{\bm{s},\bm{\lambda},\bm{\theta}}$. It is not directly obvious, if this reduction also translates to the gradient estimation, more concretely to the log-policy gradient required for \cref{eq:policy_gradient}. Following \cref{def:restricted_softmax}, the gradients w.r.t.\ $\bm{\theta}$ of a \texttt{RESTRICTED}-\texttt{SOFTMAX}-\gls{vqc} simplify to
\begin{eqnarray}
		&&\nabla_{\bm{\theta}} \ln \pi_{\Theta} \left(a \mid \bm{s} \right)
		\\&=& \beta \cdot \left( \nabla_{\bm{\theta}} w_a \expval{O}_{\bm{s},\bm{\theta},\bm{\lambda}}-\sum_{a'} \pi_{\Theta}(a' \mid \bm{s}) \cdot \nabla_{\bm{\theta}} w_{a'} \expval{O}_{\bm{s},\bm{\theta},\bm{\lambda}} \right) \\
		&=& \beta \cdot \nabla_{\bm{\theta}} \expval{O}_{\bm{s},\bm{\theta},\bm{\lambda}} \left( w_a - \sum_{a'} \pi_{\Theta}(a' \mid \bm{s}) \cdot w_{a'} \right).
\end{eqnarray}
A similar derivation holds for $\nabla_{\bm{\lambda}} \ln \pi_{\Theta} \left(a \mid \bm{s} \right)$. The gradient w.r.t.\ the weight associated with action $x$ is given as
\begin{eqnarray}
		&&\nabla_{w_{x}} \ln \pi_{\Theta} \left(a \mid \bm{s} \right) \\
		&=& \beta \cdot \left( \nabla_{w_{x}} w_a \expval{O}_{\bm{s},\bm{\theta},\bm{\lambda}} - \sum_{a'} \pi_{\Theta}(a' \mid \bm{s}) \cdot \nabla_{w_{x}} w_{a'} \expval{O}_{\bm{s},\bm{\theta},\bm{\lambda}} \right) \\
		&=& \beta \cdot \expval{O}_{\bm{s},\bm{\theta},\bm{\lambda}} \left( \delta_{a=x} - \pi_{\Theta}(x \mid \bm{s}) \right),
\end{eqnarray}
\noindent with $\delta_{a=x} = 1$, iff $a=x$, and $\delta_{a,x} = 0$ otherwise. Consequently, for both, action selection and gradient computation, only one observable has to be considered. This removes the dependence of the circuit sampling complexity on the number of actions, which must be avoided for environments with big action spaces.

\subsection{\label{subsec:structural_restriction}Structural Restriction of RL Performance for \texttt{RESTRICTED}-\texttt{SOFTMAX}-VQC}

While the previous considerations are quite promising when talking about circuit sampling complexity, there is also a serious drawback of the \texttt{RESTRICTED}-\texttt{SOFTMAX}-\gls{vqc} approach. More concretely, it is not suitable for problems with big action spaces, as most information is only contained in the classical weight parameters. This statement is concertized and proven in the following. First of all, we restrict our initial considerations to a specific type of \gls{rl} environment.
\begin{definition}[uniform environment] \label{def:uniform_env}
An environment $\mathcal{E}_{\mathcal{A}}$ is considered uniform, iff it is solved by a deterministic policy, which is expected to select each distinct action the same amount of times. More explicitly, let $\mathcal{S}_i$ denote a set of states from $\mathcal{S}$, with $\bigcup_{\mathcal{A}} \mathcal{S}_i = \mathcal{S}$ and $\mathcal{S}_i \cap \mathcal{S}_j = \emptyset$ for all $i \neq j$. Following the optimal policy $\pi_{*}$, each of the state sets must be equally likely to observe. With the notion of expected fraction of time spend in state $\bm{s}$ as $\mu(\bm{s})$ from \citeauthor{Sutton_2018}, this is stated as $\sum_{\bm{s} \in \mathcal{S}_i} \mu(\bm{s}) = \frac{1}{\abs{\mathcal{A}}}$. Let now $\mathcal{S}_i$ (with $i \in \lbrace 0, ~1, ~..., ~\abs{\mathcal{A}}-1 \rbrace$) be an arbitrary state set and $\bm{s}$ an arbitrary state from this set. It must hold, that $\pi_{*}\left( a_i \mid \bm{s} \right) = 1$, and consequently $\pi_{*}\left( a_j \mid \bm{s} \right) = 0$ for all $i \neq j$.
Hereby, the accuracy $ACC_{\pi}(\mathcal{E}_{\mathcal{A}})$ denotes the share of selected optimal actions in this environment, following policy $\pi$.
\end{definition}
A simple instance of such an environment can be constructed from a \texttt{ContextualBandits} scenario. Assume $8$ states and $4$ actions, where action $0$ is optimal for states from $\mathcal{S}_0 = \lbrace 0, 1 \rbrace$, action $1$ for states from $\mathcal{S}_2 = \lbrace 2, 3 \rbrace$, action $2$ for states from $\mathcal{S}_2 = \lbrace 4, 5 \rbrace$, and action $3$ for states from $\mathcal{S}_3 = \lbrace 6, 7 \rbrace$. As the states in the \texttt{ContextualBandits} environment are selected uniformly at random in every step, every state set is expectedly visited $\frac{1}{4}$ of the time. Additionally, the optimal action is different for all state sets, satisfying the conditions from \cref{def:uniform_env}.
\begin{lemma} \label{lem:restricted_softmax}
	Let $\pi$ be any \texttt{RESTRICTED}-\texttt{SOFTMAX}-\gls{vqc} policy, with w.l.o.g.\ $\expval{O}_{\bm{s}} \in \left[-1, 1 \right]$, for all $\bm{s} \in \mathcal{S}$. Given a uniform environment $\mathcal{E}_{\mathcal{A}}$, the performance of the model is upper bounded by
	\begin{equation} \label{eq:acc_restricted_softmax}
		ACC_{\pi} \left( \mathcal{E}_{\mathcal{A}} \right) \leq \frac{2}{\abs{\mathcal{A}}} \sum_{k=1}^{\abs{\mathcal{A}}/2} \frac{1}{k}.
	\end{equation}
\end{lemma}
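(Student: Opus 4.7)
The restricted-softmax policy of \cref{def:restricted_softmax} depends on $\bm{s}$ only through the scalar $y_{\bm{s}} := \beta\, \expval{O}_{\bm{s},\bm{\lambda},\bm{\theta}}$, so writing $g_a(y) := e^{w_a y}/\sum_{a'} e^{w_{a'} y}$ gives $\pi_\Theta(a \mid \bm{s}) = g_a(y_{\bm{s}})$. The plan is to first unfold the definition of the accuracy and use the uniformity assumption to decouple it from the concrete map $\bm{s}\mapsto y_{\bm{s}}$, then to reduce the claim to a purely analytic bound on $\sup_y g_a(y)$, and finally to sum that bound over actions. Since each state set has total weight $\sum_{\bm{s}\in\mathcal{S}_i} \mu(\bm{s}) = 1/M$ (with $M := \abs{\mathcal{A}}$), and the assignment $i \mapsto a_i$ is a bijection of $\{0,\ldots,M-1\}$, bounding each state-wise term by its supremum over $y$ already gives
\begin{equation*}
    ACC_{\pi}(\mathcal{E}_{\mathcal{A}}) = \sum_i \sum_{\bm{s}\in\mathcal{S}_i} \mu(\bm{s})\, g_{a_i}(y_{\bm{s}}) \;\leq\; \frac{1}{M} \sum_{a=0}^{M-1} \sup_{y \in \mathbb{R}} g_a(y).
\end{equation*}

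The crux is the rank-based bound on $\sup_y g_a(y)$, and I expect this step to be the main obstacle: the naive estimate $\sup_y g_a \leq 1$ only recovers the trivial $ACC_\pi \leq 1$. Ordering the weights without loss of generality as $w_0 \geq w_1 \geq \cdots \geq w_{M-1}$, the key observation is that for any fixed $y$ the tuple $(g_0(y),\ldots,g_{M-1}(y))$ is a probability vector on $M$ points, so its $k$-th largest entry is at most $1/k$. Because the softmax is order-preserving, the ranks of the $g_a(y)$ are controlled by the signs of the products $w_a y$: for $y > 0$ the value $g_a(y)$ is the $(a+1)$-th largest, giving $g_a(y) \leq 1/(a+1)$; for $y < 0$ the ordering reverses and $g_a(y) \leq 1/(M-a)$; at $y = 0$, $g_a = 1/M$ trivially satisfies both. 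Hence
\begin{equation*}
    \sup_{y \in \mathbb{R}} g_a(y) \;\leq\; \max\!\left( \frac{1}{a+1},\; \frac{1}{M-a} \right).
\end{equation*}

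The final step is elementary. One checks that $1/(a+1) \geq 1/(M-a)$ iff $a \leq M/2 - 1$, so splitting the sum at $a = M/2$ and reindexing via $k = a+1$ on the lower half and $k = M-a$ on the upper half yields $\sum_{a=0}^{M-1} \max(1/(a+1), 1/(M-a)) = 2\sum_{k=1}^{M/2} 1/k$. Combining with the $1/M$ prefactor produces the claimed bound; the implicit assumption that $M$ is even is reflected in the form of the right-hand side.
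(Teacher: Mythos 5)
Your proof is correct and follows essentially the same route as the paper's: order the weights, bound the probability of the $k$-th ranked action by $1/k$ (you via the probability-vector/monotonicity argument, the paper via an equivalent proof by contradiction), and observe that flipping the sign of $\expval{O}_{\bm{s}}$ reverses the ranking, so each action attains at best the better of its two possible ranks, yielding $\frac{2}{M}\sum_{k=1}^{M/2}1/k$. If anything, your write-up is more explicit at the two points the paper is terse --- the bound $\sup_y g_a(y) \leq \max\left(1/(a+1),\,1/(M-a)\right)$ and the reindexing that produces the factor of $2$, which the paper only sketches as ``introducing the missing factor of $0.5$''.
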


\begin{proof}
	Assume for now, that $\expval{O}_{\bm{s}} \in \left(0, 1 \right]$ for all $\bm{s} \in \mathcal{S}$. Let $n = \abs{\mathcal{A}}$ denote the number of actions and the corresponding weights are w.l.o.g. ordered by
	\begin{equation} \label{eq:weight_ordering}
		w_0 \geq w_1 \geq \cdots \geq w_{n-2} \geq w_{n-1}.
	\end{equation}
	The statement $\pi \left( a_k \mid S_k \right) \leq \frac{1}{k}$ can be reformulated as $\sum_{\bm{s} \in \mathcal{S}_k} p_{\bm{s}} \pi \left( a_k \mid \bm{s} \right) \leq \sum_{\bm{s} \in \mathcal{S}_k} p_{\bm{s}} \frac{1}{k}$, where $p_{\bm{s}}$ denotes the probability of observing state $\bm{s}$ out of set $\mathcal{S}_k$. For the inequality to hold, it is sufficient that $\pi \left( a_k \mid \bm{s} \right) \leq \frac{1}{k}$ for all $\bm{s} \in \mathcal{S}_k$, which is proven by contradiction:
	\begin{align}
	    & \frac{e^{w_{k} \expval{O}_{\bm{s}}}}{\sum_{a'} e^{w_{a'} \expval{O}_{\bm{s}}}} &>&~~~~~~~~~~~~~~ \frac{1}{k} &&\\
 		\Leftrightarrow ~~~& (k-1) \cdot e^{w_{k} \expval{O}_{\bm{s}}} &>&~~~~~~~~~~~~~~ \sum_{a' \neq k} e^{w_{a'} \expval{O}_{\bm{s}}} && \\
 		\Leftrightarrow ~~~& k-1 &>&~~~~~~~~~~~~~~ \underbrace{\frac{e^{w_0 \expval{O}_{\bm{s}}}}{e^{w_{k} \expval{O}_{\bm{s}}}}}_{\geq 1} + \cdots + \underbrace{\frac{e^{w_{n-1} \expval{O}_{\bm{s}}}}{e^{w_{k} \expval{O}_{\bm{s}}}}}_{\geq 1} & \geq &~~~~~~~~~~ k-1 ~~~\lightning
	\end{align}
	The last step uses \cref{eq:weight_ordering} combined with the monotonicity of the exponential function. As $\mathcal{E}_{\mathcal{A}}$ is an uniform environment, for the described policy it holds $ACC_{\pi} \left( \mathcal{E}_{\mathcal{A}} \right) \leq \frac{1}{\abs{\mathcal{A}}} \sum_{k=1}^{\abs{\mathcal{A}}} \frac{1}{k}$. An improvement of this bound can be achieved by allowing $\expval{O}_{\bm{s}} \in \left[ -1, 1 \right]$. Multiplying with a negative value inverts the inequality chain from \cref{eq:weight_ordering} to $-w_0 \leq -w_1 \leq \cdots \leq -w_{n-2} \leq -w_{n-1}$, which introduces the missing factor of $0.5$ into \cref{eq:acc_restricted_softmax}. Hereby it is implicitly assumed that environment contains an even amount of actions, but it is straightforward to adapt the bound for the odd case. The case $\expval{O}_{\bm{s}} = 0$ does not lead to any improvement, as all actions will be selected with equal probability.
\end{proof}
This upper bound on performance makes the \texttt{RESTRICTED}-\texttt{SOFTMAX}-VQC unsuited for uniform environments with large action spaces. In fact, for $\abs{\mathcal{A}} \to \infty$ the accuracy converges to $0$. Already for $4$ actions, the accuracy is bounded by $\frac{2}{4} \left( \frac{1}{1} + \frac{1}{2} \right) = \frac{3}{4}$, experimental results on the \texttt{ContextualBandits} environment described above are depicted in \cref{fig:restricted_softmax}. We expect this result to hold in a weakened form for more general environments.

\begin{figure}
    \centering
    \begin{tikzpicture}
        \centering
        \begin{axis}[
        	xlabel=episode,
        	ylabel=average collected reward,
        	title={Training Performance},
        	grid=both,
        	xmin=-30,xmax=1530,
            ymin=0.2,ymax=0.99,
            tick label style={font=\footnotesize},
        	width=0.9\linewidth,
        	height=.25\textheight,
        	axis x line=bottom, axis y line=left, tick align = outside,
            legend columns=-1,
        	legend style={/tikz/every even column/.append style={column sep=0.1cm},at={(0.47,1)},anchor=south,yshift=-4.5mm,font=\footnotesize},
        	no marks]
        	  \addplot[line width=.5pt,solid,color=darkgreen!40] %
            	table[x=episode,y=raw-d2,col sep=comma]{figures/data/data_restricted_softmax_reduced.csv};
            \addplot[line width=.5pt,solid,color=blue!40] %
            	table[x=episode,y=rsm-d2,col sep=comma]{figures/data/data_restricted_softmax_reduced.csv};
            \addplot[line width=.5pt,solid,color=violet!40] %
            	table[x=episode,y=rsm-d4,col sep=comma]{figures/data/data_restricted_softmax_reduced.csv};
            \addplot[line width=.5pt,solid,color=orange!40] %
            	table[x=episode,y=sm-d4,col sep=comma]{figures/data/data_restricted_softmax_reduced.csv};
            \addplot[line width=1pt,solid,color=darkgreen] %
            	table[x=episode,y=raw-d2-avg,col sep=comma]{figures/data/data_restricted_softmax_reduced.csv};
            \addplot[line width=1pt,solid,color=blue] %
            	table[x=episode,y=rsm-d2-avg,col sep=comma]{figures/data/data_restricted_softmax_reduced.csv};
            \addplot[line width=1pt,solid,color=violet] %
            	table[x=episode,y=rsm-d4-avg,col sep=comma]{figures/data/data_restricted_softmax_reduced.csv};
            \addplot[line width=1pt,solid,color=orange] %
            	table[x=episode,y=sm-d4-avg,col sep=comma]{figures/data/data_restricted_softmax_reduced.csv};
            \addplot[color=red, domain=-25:1524, line width=1pt] {0.75};
            \addplot[dashed, color=black, domain=-25:1524, line width=1.2pt] {0.25};
            \legend{,,,,$\pi^{raw}_{d=2}$,$\pi^{r-sm}_{d=2}$,$\pi^{r-sm}_{d=4}$,$\pi^{sm}_{d=4}$}
        \end{axis}
    \end{tikzpicture}
    \caption{\label{fig:restricted_softmax}Experiments on a \texttt{ContextualBandits} environment (which is uniform following \cref{def:uniform_env}) demonstrate the implications of \cref{lem:restricted_softmax}. The reward structure is defined in a way, such that the expected reward is equivalent to the introduced notion of accuracy $ACC_{\pi}(\mathcal{E}_{\mathcal{A}})$. A \texttt{RESTRICTED}-\texttt{SOFTMAX}-\gls{vqc} $\pi^{r-sm}$ (with circuit depth $d=2$ and $d=4$) does not perform even close to the theoretical limit of $0.75$, indicated by the red horizontal line, while the \texttt{RAW}-\gls{vqc} $\pi^{raw}$ and \texttt{SOFTMAX}-\gls{vqc} $\pi^{sm}$ surpass this value. The \texttt{RAW}-\gls{vqc} outperforms the \texttt{SOFTMAX}-\gls{vqc}, even with a shallower circuit. To achieve this, we used the considerations on global policy construction from \cref{sec:theory}. All results are averaged over $50$ independent experiments.}
\end{figure}

\subsection{\label{subsec:implications_softmax}Implications for \texttt{SOFTMAX}-VQC Policy}

Directly following from \cref{lem:restricted_softmax}, one can make also a statement about the non-restricted \texttt{SOFTMAX}-\gls{vqc} policy:

\begin{corollary} \label{cor:softmax_vqc}
	Let $\pi$ be any \texttt{SOFTMAX}-\gls{vqc} policy and $\mathcal{E}_{\mathcal{A}}$ a uniform environment. In order to not impose any constraints following \cref{lem:restricted_softmax}, at most two actions can be associated with one unique observable. This results in a total requirement of $\lceil \abs{\mathcal{A}}/2 \rceil$ different observables.
\end{corollary}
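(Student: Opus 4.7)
The plan is to reduce \cref{cor:softmax_vqc} directly to \cref{lem:restricted_softmax} via a conditioning argument. Partition the action set $\mathcal{A}$ into groups according to which observable is used by the \texttt{SOFTMAX}-\gls{vqc} policy, and let $G = \lbrace a_{i_1}, \ldots, a_{i_k} \rbrace$ be an arbitrary group of $k$ actions sharing one observable. Two distinct actions that used the literally identical observable would be indistinguishable by the softmax, so the only remaining way to differentiate actions inside a shared-observable group is via action-dependent scalar prefactors; that is, one can write $\expval{O_{a_{i_j}}}_{\bm{s}} = w_{i_j} \expval{O}_{\bm{s}}$ with distinct $w_{i_j}$. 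This is precisely the structure of \cref{def:restricted_softmax} restricted to the actions in $G$.

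The key computation is the conditional policy given that the chosen action lies in $G$. Upon conditioning, the softmax normalization over all of $\mathcal{A}$ cancels between numerator and denominator, leaving
\begin{equation*}
\pi_{\Theta}(a_{i_j} \mid \bm{s}, a \in G) = \frac{e^{\beta w_{i_j} \expval{O}_{\bm{s}}}}{\sum_{\ell=1}^{k} e^{\beta w_{i_\ell} \expval{O}_{\bm{s}}}},
\end{equation*}
which is literally a \texttt{RESTRICTED}-\texttt{SOFTMAX}-\gls{vqc} policy on the $k$-action sub-problem. I would then construct a uniform sub-environment $\mathcal{E}_G$ from $\mathcal{E}_{\mathcal{A}}$ by keeping only those states whose optimal action lies in $G$; by \cref{def:uniform_env} the associated state classes each carry weight $1/\abs{\mathcal{A}}$, so after conditioning each has weight $1/k$, and $\mathcal{E}_G$ is itself uniform with $k$ actions. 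Applying \cref{lem:restricted_softmax} to this reduction yields an accuracy bound on $\mathcal{E}_G$ that equals $1$ for $k \leq 2$ but is strictly less than $1$ as soon as $k \geq 3$ (for $k=2$ one obtains $\tfrac{2}{2} \cdot 1 = 1$, for $k=4$ one obtains $\tfrac{3}{4}$, and analogously for general $k \geq 3$, including the odd case).

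Combining these observations, any group with $k \geq 3$ shared-observable actions would impose the \cref{lem:restricted_softmax} bound on $\mathcal{E}_G$, and consequently on the overall \texttt{SOFTMAX}-\gls{vqc} policy in $\mathcal{E}_{\mathcal{A}}$. To avoid any such constraint, each observable may therefore be reused for at most two actions, forcing the total number of distinct observables to be at least $\lceil \abs{\mathcal{A}}/2 \rceil$. The main obstacle I anticipate is making the reduction to $\mathcal{E}_G$ airtight: one has to confirm that suboptimality of the conditional policy on $\mathcal{E}_G$ genuinely propagates back to suboptimality of the unconditional policy on $\mathcal{E}_{\mathcal{A}}$ (which relies on the fact that the events ``the sampled action lies in $G$'' and ``the optimal action lies in $G$'' coincide on the sub-environment), and that the saturation of the bound at $k=2$ leaves the pairing of actions truly unconstrained, so no hidden restriction remains.
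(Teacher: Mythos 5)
Your reduction is correct and is essentially the argument the paper intends: the paper states the corollary as following ``directly'' from \cref{lem:restricted_softmax} without writing out a proof, and your grouping-by-observable plus conditioning argument is the natural way to make that precise. The propagation worry you flag is harmless, since $\pi(a^* \mid \bm{s}) \leq \pi(a^* \mid \bm{s}, a \in G)$ holds trivially, so the per-group bound from the lemma upper-bounds each group's unconditional contribution to the accuracy, and any group of size $k \geq 3$ forces the total accuracy strictly below $1$, while groups of size at most $2$ impose no constraint.
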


Following \cref{cor:softmax_vqc}, the number of measured observables needs to increase linearly with the size of the action space, to avoid a bound on \gls{rl} performance. Of course, this is only a necessary and not sufficient condition for optimal performance of the model. We assume, that these results at least partially extend to more general environments.

\begin{corollary}
    \label{cor:softmax}
    In order to not put any performance constraints on the \texttt{SOFTMAX}-\gls{vqc} policy by construction, it has to incorporate $\mathcal{O}\left( \abs{\mathcal{A}} \right)$ \emph{different} observables $O_a$, i.e. the number has to scale linearly in the number of actions.
\end{corollary}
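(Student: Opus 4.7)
The plan is to treat this as an essentially immediate asymptotic restatement of Corollary~\ref{cor:softmax_vqc}, packaged via a simple counting argument. First I would re-invoke the central structural fact already delivered by Corollary~\ref{cor:softmax_vqc}: in any uniform environment $\mathcal{E}_{\mathcal{A}}$ in the sense of \cref{def:uniform_env}, as soon as three or more actions share a common observable $O_a$ in \cref{eq:softmax}, the policy restricted to those actions coincides with an instance of the \texttt{RESTRICTED}-\texttt{SOFTMAX}-\gls{vqc} from \cref{def:restricted_softmax}, whose accuracy on the associated state-subset is strictly bounded below $1$ by Lemma~\ref{lem:restricted_softmax}. Hence any single observable may be assigned to at most two distinct actions without imposing a structural performance cap.

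Next I would execute the counting step. Let $N$ denote the number of distinct observables appearing in \cref{eq:softmax}, and let $n_k$ be the number of actions sharing the $k$-th observable. The induced partition $\{a \in \mathcal{A} : O_a = O^{(k)}\}_{k=1}^{N}$ satisfies $\sum_{k=1}^{N} n_k = \abs{\mathcal{A}}$, and by the preceding paragraph we may take $n_k \leq 2$ for every $k$. This yields $\abs{\mathcal{A}} \leq 2N$, i.e.\ $N \geq \lceil \abs{\mathcal{A}}/2 \rceil$, so $N \in \Omega(\abs{\mathcal{A}})$. Combined with the trivial upper bound $N \leq \abs{\mathcal{A}}$ (one observable per action is always admissible), we obtain $N \in \Theta(\abs{\mathcal{A}})$, in particular strictly linear scaling, which is precisely the content of Corollary~\ref{cor:softmax}.

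The main obstacle — really the only one — is interpretational rather than technical: pinning down what ``not put any performance constraints on the \texttt{SOFTMAX}-\gls{vqc} policy by construction'' formally means. I would read it as the assertion that no accuracy ceiling strictly below $1$ should arise purely from the functional form of the policy, uniformly over all uniform environments on $\abs{\mathcal{A}}$ actions. Under this reading the corollary reduces to the counting argument above. The one piece of extra care needed is to verify that, given any grouping of $\geq 3$ actions under a shared observable, one can instantiate a uniform $\mathcal{E}_{\mathcal{A}}$ whose optimal-action structure singles out exactly that group (as in the \texttt{ContextualBandits} construction following \cref{def:uniform_env}), so that Lemma~\ref{lem:restricted_softmax} genuinely fires on the induced sub-problem and produces the contradiction that forces $n_k \leq 2$.
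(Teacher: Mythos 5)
Your proposal is correct and follows essentially the same route as the paper, which presents \cref{cor:softmax} as an immediate consequence of \cref{cor:softmax_vqc} (itself derived from \cref{lem:restricted_softmax}): at most two actions may share an observable without triggering the accuracy cap on a uniform environment, so at least $\lceil \abs{\mathcal{A}}/2 \rceil$ distinct observables are required. Your explicit counting step and the remark about instantiating an adversarial uniform environment for any group of three or more actions sharing an observable merely make precise what the paper leaves implicit.
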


Unfortunately, a naive interpretation of \cref{cor:softmax} makes it impossible to circumvent the discussed bad scaling w.r.t.\ circuit sampling. Strategies to avoid the aforementioned scaling by a suitable choice of observables combined with classical post-processing might exist, but will not be considered in this paper. We are confident that the stated arguments and the performance advantage of the \texttt{RAW}-\gls{vqc} demonstrated in \cref{sec:experiments} provide adequate justification for the focus on this formulation.


\section{\label{sec:local_global}Action Decoding with Local and Global Observables}

\begin{figure*}[tb]
    \centering
    \subfigure[Measures only a $1$-qubit Pauli-$Z$ observable on the first qubit (associated with the \emph{local} policy in \cref{eq:policy_local}).]{
        \includegraphics[width=.28\linewidth]{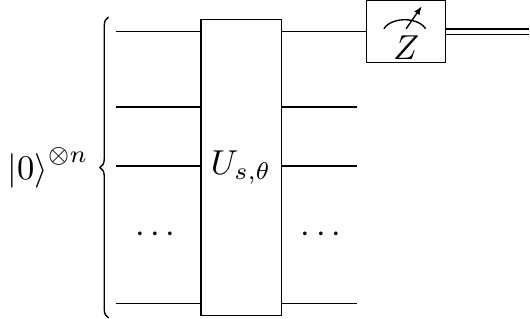}
    }
    \qquad
    \subfigure[Measures a tensored Pauli-$Z$ observable on all qubits (associated with the \emph{global} policy in \cref{eq:policy_global}). The depicted construction of $CX$ gates and a $1$-qubit Pauli-$Z$ measurement on an ancilla qubit defines a equivalent policy.]{
        \includegraphics[width=.6\linewidth]{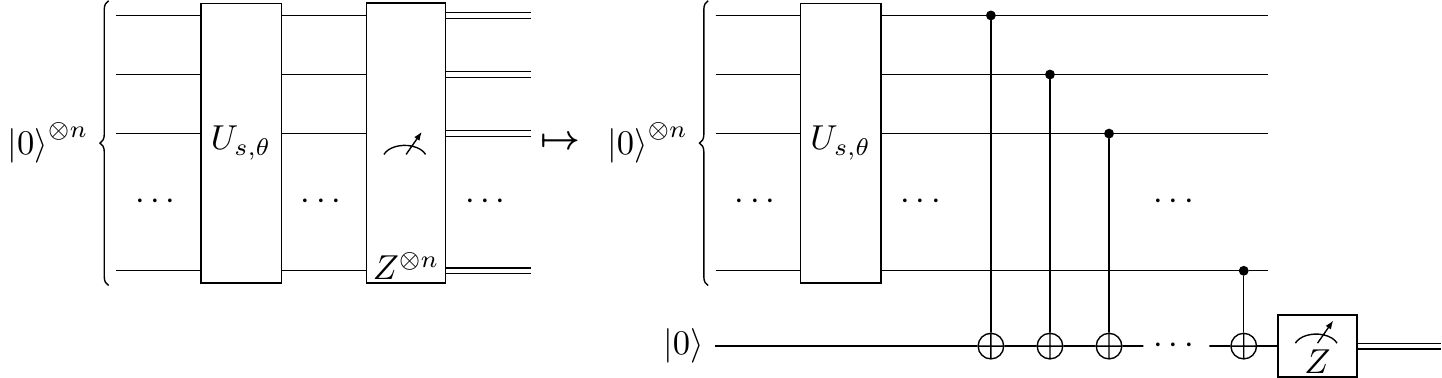}
    }
    \caption{Two types of observables, that give rise to different policy formulations.}
    \label{fig:local_global}
\end{figure*}

Continuing with the considerations from \cref{subsec:partitioning}, for an environment with $2$ actions, we have to define $\mathcal{V}_{a=0}$ and $\mathcal{V}_{a=1}$. Following \citeauthor{Jerbi_2021} for a $3$-qubit system yields:
\begin{eqnarray}
    	\mathcal{V}_{a=0}^{\text{loc}} &=& \left\{ \ket{000}, ~\ket{001}, ~\ket{010} , ~\ket{011} \right\} \\
    	\mathcal{V}_{a=1}^{\text{loc}} &=& \left\{ \ket{100}, ~\ket{101}, ~\ket{110} , ~\ket{111} \right\}
\end{eqnarray}
Looking at the above partitioning, all relevant information seems to be contained in the first qubit. For a measurement in the computational basis of the quantum state prepared by the \gls{vqc}, that returns the bitstring $b_2 b_1 b_0$, the action is given as $a = b_2$. In the more general case of $n$ qubits and $2$ actions, this consequently generalizes to $a^{\text{loc}} \gets b_{n-1}$, which can also be expressed as
\begin{equation}
    \label{eq:partitioning_local}
    \mathcal{V}_{a}^{\text{loc}} = \left\{ \ket{b_{n-1} b_{n-2} \cdots b_1 b_0} \mid a = b_{n-1} \right\}.
\end{equation}
At the other extreme, in the case of $2$ actions, we can select the action as $a^{\text{glob}} \gets \bigoplus_{i=0}^{n-1} b_i$, i.e., apply a parity function. For the sake of completeness, the corresponding eigenstate sets for $3$ qubits is
\begin{eqnarray}
    	\mathcal{V}_{a=0}^{\text{glob}} &=& \left\{ \ket{000}, ~\ket{011}, ~\ket{101} , ~\ket{110} \right\} \\
    	\mathcal{V}_{a=1}^{\text{glob}} &=& \left\{ \ket{001}, ~\ket{010}, ~\ket{100} , ~\ket{111} \right\},
\end{eqnarray}
and more generally for $n$ qubits:
\begin{equation}
    \label{eq:partitioning_global}
    \mathcal{V}_{a}^{\text{glob}} = \left\{ \ket{b_{n-1} b_{n-2} \cdots b_1 b_0} \mid a = \bigoplus_{i=0}^{n-1} b_i \right\}.
\end{equation}
So far, we interpreted the policy computation routine as the assignment of the post-measurement state to the containing partition. For the above $2$-action special cases in \cref{eq:partitioning_local,eq:partitioning_global}, it is possible to instead model the policy in terms of expectation values of observables:

\begin{eqnarray}
    \pi_{\Theta} \left( a|\bm{s} \right) &=& \sum_{v \in \mathcal{V}_a} \expval{\psi_{\bm{s},\Theta} \mid v} \expval{ v \mid \psi_{\bm{s},\Theta}} \\
	 &=& \frac{\sum_{v \in \mathcal{V}_a} \expval{\psi_{\bm{s},\Theta} \mid v} \expval{ v \mid \psi_{\bm{s},\Theta}}- \left( 1 - \sum_{v \in \mathcal{V}_a} \expval{\psi_{\bm{s},\Theta} \mid v} \expval{ v \mid \psi_{\bm{s},\Theta}} \right) + 1}{2} \\
	 &=& \frac{\sum_{v \in \mathcal{V}_a} \expval{\psi_{\bm{s},\Theta} \mid v } \expval{ v \mid \psi_{\bm{s},\Theta}} - \sum_{v \in \mathcal{V}_{\tilde{a}}} \expval{\psi_{\bm{s},\Theta} \mid v } \expval{ v \mid \psi_{\bm{s},\Theta}} + 1}{2},
\end{eqnarray}
where $\tilde{a}$ denotes the \emph{complement} action of $a$.

It is easy to check, that  measuring the observable $Z \otimes I^{\otimes n-1}$ returns the value $+1$, iff the post-measurement state lives in the space spanned by the elements of $\mathcal{V}_{a=0}^{\text{loc}}$, as defined in \cref{eq:partitioning_local}. Vice versa, in all other cases, the measurement outputs a value of $-1$. This simplifies the above equation to
\begin{equation}
    \label{eq:policy_local}
    \pi_{\Theta}^{\text{loc}} \left( a|\bm{s} \right) = \frac{ \left( -1 \right)^{a} \cdot \expval{\psi_{\bm{s},\Theta} \mid Z \otimes I^{\otimes n-1} \mid \psi_{\bm{s},\Theta}} + 1}{2}.
\end{equation}

Returning to the projector formalism from \cref{def:rawvqc}, this can alternatively also be expressed as $\pi_{\Theta}^{\text{loc}}(a | \bm{s}) = \expval{\psi_{\bm{s},\Theta} \left| \sum_{\bm{b} \in \lbrace 0,1 \rbrace^{n-1} } \ket{a}\ket{\bm{b}} \bra{a}\bra{\bm{b}} \right| \psi_{\bm{s},\Theta}}$, i.e., projections onto the two respective sub-spaces for $a=0$ and $a=1$.

A similar argument can be made for $\mathcal{V}_{a}^{\text{glob}}$, with the difference that the observable has to be $Z^{\otimes n}$, which gives
\begin{equation}
    \label{eq:policy_global}
    \pi_{\Theta}^{\text{glob}} \left( a \mid \bm{s} \right) = \frac{ \left( -1 \right)^{a} \cdot \expval{\psi_{\bm{s},\Theta} \mid Z^{\otimes n} \mid \psi_{\bm{s},\Theta}} + 1}{2}.
\end{equation}
As above, this could also be stated as projections onto the respective subspaces by reformulating $\pi_{\Theta}^{\text{glob}}(a \mid \bm{s}) = \expval{\psi_{\bm{s},\Theta} \left| \sum_{\bm{b} \in \lbrace 0,1 \rbrace^{n}}^{\oplus \bm{b} = a} \ket{\bm{b}} \bra{\bm{b}} \right| \psi_{\bm{s},\Theta}}$. Note, in general the post-measurement states for both formulations are different.

As also visualized in \cref{fig:local_global}, these approaches correspond to using a \emph{local} and \emph{global} observable, respectively. More precisely, for the left diagram, one should refer to a $1$-local measurement, as only a $1$-qubit observable is measured. For a $q$-local measurement, a $q$-qubit observable is measured on some subset containing $q$ out of the $n$ qubits. When $q=n$, we arrive at a global observable, as shown in the right part of the diagram. It seems plausible, that as $q$ approaches $n$, the measurement can be thought of as becoming more and more global.
The two setups displayed in \cref{fig:local_global} are only the edge cases. Let us assume a post-processing function that decides on an action based on the parity of the first $q$ bits of the reconstructed bitstring. This can be expressed as
\begin{equation}
    \label{eq:policy_q_local}
    \pi_{\Theta}^{\text{q-loc}} \left( a \mid \bm{s} \right) = \frac{ \left( -1 \right)^{a} \cdot \expval{\psi_{\bm{s},\Theta} \mid Z^{\otimes q} \otimes I^{\otimes n-q} \mid \psi_{\bm{s},\Theta}} + 1}{2},
\end{equation}
where $\pi_{\Theta}^{\text{n-loc}}$ is equivalent to $\pi_{\Theta}^{\text{glob}}$. Note that in all case only one bit of information is necessary to select one of the two actions as $\log_2(2) = 1$. Still, experiments in \cref{sec:experiments} suggest, that the \gls{rl} performance benefits from more global measurements.

This type of formulation removes the need to explicitly store partitionings and hence avoids the caveats described above. Unfortunately, this analysis only works for some special cases, i.e., when all the information is compressed into a subset of the qubits. Also, it does not generalize to larger action spaces, as the derivation of \cref{eq:policy_local,eq:policy_global,eq:policy_q_local} had to assume $\abs{\mathcal{A}} = 2$. Apart from that, the distinction between local and global observables in the considerations above is slightly incorrect. Instead of performing a global measurement to evaluate \cref{eq:policy_global}, we could get the same result by measuring a $1$-qubit Pauli-$Z$ observable on an ancilla qubit, as visualized in the right part of \cref{fig:local_global}. This ancilla qubit is initialized to $\ket{0}$, and after the evolution of the system with $U_{\bm{s},\Theta}$, it interacts via a $CX$-gate with each of the $n$ original qubits.


\section{\label{sec:example_ei}Extended Example on Extracted Information and Globality Measure}

This appendix deals with a closer analysis of the extracted information and globality measure of the partitioning example introduced in \cref{subsubsec:globality_measure}:
\begin{align}
    \mathcal{C}_{a=0} &= \left\{ 0000, ~0010, ~0100, ~0110 \right\} \label{eq:C_example_1} \\
    \mathcal{C}_{a=1} &= \left\{ 0001, ~0011, ~0101, ~0111 \right\} \label{eq:C_example_2} \\
    \mathcal{C}_{a=2} &= \left\{ 1000, ~1010, ~1101, ~1111 \right\} \label{eq:C_example_3} \\
    \mathcal{C}_{a=3} &= \left\{ 1001, ~1011, ~1100, ~1110 \right\} \label{eq:C_example_4}
\end{align}

As the system is really small, it is straightforward to determine the extracted information following \cref{def:extracted_info} for all $2^4 = 16$ bitstrings, by just considering all bit combinations. However, as already discussed previously, this is not feasible in the general case.

\begin{table}
    \centering
        \caption{\label{tab:extracted_information}Extracted information of the post-processing function $f_{\mathcal{C}}$ for the partitioning $\mathcal{C}$ given in \cref{eq:C_example_1,eq:C_example_2,eq:C_example_3,eq:C_example_4} for all $16$ bitstrings. The marked bits are used to get an unambiguous assignment to the respective partitions. It is easy to check that one can not go with less information, consequently, the count corresponds to the extracted information.\\}
        \begin{tabular}{ccccc||cccc}    
            \toprule
            &bitstring $\bm{b}$ & $EI_{f_{\mathcal{C}}}(\bm{b})$ & containing partition && bitstring $\bm{b}$ & $EI_{f_{\mathcal{C}}}(\bm{b})$ & containing partition & \\
            \midrule
            &\fcolorbox{black}{white}{0}\fcolorbox{white}{white}{0}\fcolorbox{white}{white}{0}\fcolorbox{black}{white}{0} & 2 & $\mathcal{C}_{a=0}$ && \fcolorbox{black}{white}{1}\fcolorbox{black}{white}{0}\fcolorbox{white}{white}{0}\fcolorbox{black}{white}{0} & 3 & $\mathcal{C}_{a=2}$ &\\
            &\fcolorbox{black}{white}{0}\fcolorbox{white}{white}{0}\fcolorbox{white}{white}{0}\fcolorbox{black}{white}{1} & 2 & $\mathcal{C}_{a=1}$ && \fcolorbox{black}{white}{1}\fcolorbox{black}{white}{0}\fcolorbox{white}{white}{0}\fcolorbox{black}{white}{1} & 3 & $\mathcal{C}_{a=3}$ &\\
            &\fcolorbox{black}{white}{0}\fcolorbox{white}{white}{0}\fcolorbox{white}{white}{1}\fcolorbox{black}{white}{0} & 2 & $\mathcal{C}_{a=0}$ && \fcolorbox{black}{white}{1}\fcolorbox{black}{white}{0}\fcolorbox{white}{white}{1}\fcolorbox{black}{white}{0} & 3 & $\mathcal{C}_{a=2}$ &\\
            &\fcolorbox{black}{white}{0}\fcolorbox{white}{white}{0}\fcolorbox{white}{white}{1}\fcolorbox{black}{white}{1} & 2 & $\mathcal{C}_{a=1}$ && \fcolorbox{black}{white}{1}\fcolorbox{black}{white}{0}\fcolorbox{white}{white}{1}\fcolorbox{black}{white}{1} & 3 & $\mathcal{C}_{a=3}$ &\\
            &\fcolorbox{black}{white}{0}\fcolorbox{white}{white}{1}\fcolorbox{white}{white}{0}\fcolorbox{black}{white}{0} & 2 & $\mathcal{C}_{a=0}$ && \fcolorbox{black}{white}{1}\fcolorbox{black}{white}{1}\fcolorbox{white}{white}{0}\fcolorbox{black}{white}{0} & 3 & $\mathcal{C}_{a=3}$ &\\
            &\fcolorbox{black}{white}{0}\fcolorbox{white}{white}{1}\fcolorbox{white}{white}{0}\fcolorbox{black}{white}{1} & 2 & $\mathcal{C}_{a=1}$ && \fcolorbox{black}{white}{1}\fcolorbox{black}{white}{1}\fcolorbox{white}{white}{0}\fcolorbox{black}{white}{1} & 3 & $\mathcal{C}_{a=2}$ &\\
            &\fcolorbox{black}{white}{0}\fcolorbox{white}{white}{1}\fcolorbox{white}{white}{1}\fcolorbox{black}{white}{0} & 2 & $\mathcal{C}_{a=0}$ && \fcolorbox{black}{white}{1}\fcolorbox{black}{white}{1}\fcolorbox{white}{white}{1}\fcolorbox{black}{white}{0} & 3 & $\mathcal{C}_{a=3}$ &\\
            &\fcolorbox{black}{white}{0}\fcolorbox{white}{white}{1}\fcolorbox{white}{white}{1}\fcolorbox{black}{white}{1} & 2 & $\mathcal{C}_{a=1}$ && \fcolorbox{black}{white}{1}\fcolorbox{black}{white}{1}\fcolorbox{white}{white}{1}\fcolorbox{black}{white}{1} & 3 & $\mathcal{C}_{a=2}$ &\\
            \bottomrule
        \end{tabular}
\end{table}

With this work done, it is straightforward to compute the associated globality measure following \cref{eq:globality_measure} as
\begin{eqnarray}
    G_{f_{\mathcal{C}}} &=& \frac{1}{2^4} \sum_{\bm{b} \in \lbrace 0, 1 \rbrace^4} EI_{f_{\mathcal{C}}}(\bm{b}) \\
    &=& \frac{8 \cdot 2 + 8 \cdot 3}{16} = 2.5.
\end{eqnarray}
\noindent The result reads itself as that on average $2.5$ bit of information is necessary to make an unambiguous action assignment. This is obviously above the minimum value of $G^{\min}_{f_{n=4}} = \log_2 (4) = 2$, but well below the optimum of $G^{\max}_{f_{n=4}} = n = 4$. In \cref{subsec:application_example} we demonstrated how a post-processing function with optimal globality measure can be constructed for this setup.


\section{\label{sec:supplementary_optimal_partitioning}Supplementary Material on Construction of an Optimal Partitioning}

As discussed throughout \cref{subsec:globality_measure}, it is not trivial to come up with a bitstring partitioning, whose associated post-processing function is optimal w.r.t.\ the globality measure in \cref{eq:globality_measure}. However, this property is highly desirable, as it strongly correlates with \gls{rl} performance, as demonstrated in \cref{subsec:exp_results}.

\subsection{\label{subsec:search_hard}Direct Search for an Optimal Post-Processing Function is Infeasible}

Unfortunately, the number of possible partitionings is too large to perform any form of unstructured search. In fact, the number increases super-exponentially with the number of qubits $n$. To give some proportion, for $M$ actions and $N=2^n$ bitstrings, there are $N!/ \left[ M! \left(\frac{N}{M}! \right)^M \right]$ possibilities, where it is assumed that $M$ is a power of $2$, and $\mathcal{C}$ is split into sets of equal size. As some point of reference, this evaluates to approximately $2.8 \cdot 10^{34}$ potential partitionings for $N=2^6=64$ (i.e., a \gls{vqc} with $6$ qubits) and $M=4$, which corresponds to a small quantum system, even for \gls{nisq} standards. Lastly, a post-processing function with an underlying random partitioning is very unlikely to have a high globality value close to $G_{f_{\mathcal{C}}}=n$, as shown in \cref{fig:globality_dist}.

\begin{figure}
    \centering
    \subfigure[Histogram for $n=4$ and $\abs{\mathcal{A}}=2$. Of all $6435$ possible partitionings only one instance exhibits the optimal globality value $G_{f_\mathcal{C}} = 4$.]{
    \begin{tikzpicture}
    \begin{axis}[name=plot1,
        width=0.45\linewidth, height=.15\textheight,
        grid=major,
        tick label style={font=\footnotesize},
        minor y tick  num=1,
        xmin=0.9,xmax=4.15,
        ymin=0,ymax=0.27,
        ylabel={normalized count},
        xlabel={$G_{f_\mathcal{C}}$},
        xtick={1,1.5,2,2.5,3,3.5,4},
        xticklabels={$1.0$,$1.5$,$2.0$,$2.5$,$3.0$,$3.5$,$4.0$},
        ybar = .05cm,
        bar width = 6pt,
        axis x line=bottom, axis y line=left, tick align = outside,
        legend columns=-1,
        ]
        \addplot[fill=blue!66,ybar,no marks,error bars/.cd, y dir=both, y explicit] coordinates {
            (1.0,0.00064)
            (2.0,0.02506)
            (2.125,0.03427)
            (2.25,0.10125)
            (2.375,0.03011)
            (2.5,0.25337)
            (2.625,0.03738)
            (2.75,0.14174)
            (2.875,0.10436)
            (3.0,0.11007)
            (3.125,0.11422)
            (3.25,0.03583)
            (3.375,0.00519)
            (3.5,0.00519)
            (3.625,0.0)
            (3.75,0.0)
            (3.875,0.0)
            (4.0,0.00016)
        };
    \end{axis}
    \end{tikzpicture}
    }
    \qquad
    \subfigure[Histogram for $n=6$ and $\abs{\mathcal{A}}=4$. It is basically impossible to guess an partitioning with optimal globality value $G_{f_\mathcal{C}} = 6$ from all $2.8 \cdot 10^{34}$ possibilities.]{
    \begin{tikzpicture}
    \begin{axis}[name=plot2,
        width=0.45\linewidth, height=.15\textheight,
        grid=major,
        tick label style={font=\footnotesize},
        minor y tick  num=1,
        xmin=1.9,xmax=6.15,
        ymin=0,ymax=0.39,
        ylabel={normalized count},
        xlabel={$G_{f_\mathcal{C}}$},
        xtick={2,3,4,5,6},
        xticklabels={$2.0$,$3.0$,$4.0$,$5.0$,$6.0$},
        ybar = .05cm,
        bar width = 4.5pt,
        axis x line=bottom, axis y line=left, tick align = outside,
        legend columns=-1,
        ]
        \addplot[fill=blue!66,ybar,no marks,error bars/.cd, y dir=both, y explicit] coordinates {
            (3.25,0.01274)
            (3.375,0.03264)
            (3.5,0.04618)
            (3.625,0.25717)
            (3.75,0.38376)
            (3.875,0.22293)
            (4.0,0.04459)
        };
    \end{axis}
    \end{tikzpicture}
    }
    \caption{\label{fig:globality_dist}Histogram of globality values over all possible partitionings.}
    \vspace{-2mm}
\end{figure}
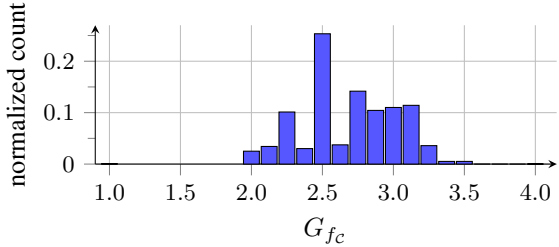
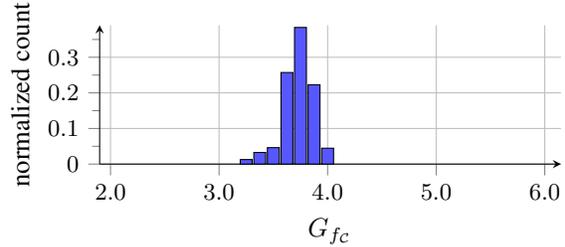

\subsection{\label{subsec:proof_optimality}Proof of Optimality for Proposed Construction}

In the main section, we proposed an approach to recursively construct a partitioning, for which the post-processing function is provably optimal w.r.t.\ the globality measure. For convenience, we restate \cref{lem:optimal_partitioning} below:

\begin{no-lemma}
	Let an arbitrary \gls{vqc} act on an $n$-qubit system. The \texttt{RAW}-\gls{vqc} policy needs to distinguish between $M := \abs{\mathcal{A}}$ actions, where $M$ is a power of $2$, i.e., $m = \log_2(M) - 1 \in \mathbb{N}_{0}$. Using the recursive definition from \cref{eq:optimal_partitioning_recursive,eq:optimal_partitioning_basis_0,eq:optimal_partitioning_basis_1}, one can define
    \begin{align*}
		\pi_{\Theta}^{\text{glob}} \left( a \mid \bm{s} \right) &= \sum_{v \in \mathcal{C}_{[a]_2}^{(m)}} \expval{\psi_{\bm{s},\Theta} \mid v} \expval{v \mid \psi_{\bm{s},\Theta}} \\
    &\approx \frac{1}{K} \sum_{k=0}^{K-1} \delta_{f_{\mathcal{C}^{(m)}}(\bm{b}^{(k)}) = a},
	\end{align*}
    where $K$ is the number of shots, $\bm{b}^{(k)}$ is the bitstring observed in the $k$-th experiment, and $\delta$ is an indicator function. The post-processing function associated with this policy is guaranteed to have the highest possible globality measure value $G_{f_{\mathcal{C}}} = n$.
\end{no-lemma}
\begin{proof}
	The proof uses induction over $m$. The base case for $m=0$ for \cref{eq:optimal_partitioning_basis_0,eq:optimal_partitioning_basis_1} is trivial, as it corresponds to the previous considerations from \cref{eq:partitioning_global,eq:policy_global}. The induction step $m \to m+1$ needs to consider the two sets, into which $\mathcal{C}_{[a]_2}^{(m)}$ gets decomposed by inversely applying \cref{eq:optimal_partitioning_recursive}:
	    \begin{eqnarray}
		    \mathcal{C}_{a_m \cdots a_1 0 a_0}^{(m+1)} &=& \left\{ \bm{b} = b_{n-1} b_{n-2} \cdots b_1 b_0 \mid \highlight{\bigoplus_{i=m+1}^{n-1} b_i = a_0} \wedge \bm{b} \in \mathcal{C}_{\underbrace{a_m \cdots a_1 \left( 0 \oplus a_0 \right)}_{[a]_2}}^{(m)} \right\} \\
		    \mathcal{C}_{a_m \cdots a_1 1 \tilde{a}_0}^{(m+1)} &=& \left\{ \bm{b} = b_{n-1} b_{n-2} \cdots b_1 b_0 \mid \highlight{\bigoplus_{i=m+1}^{n-1} b_i = \tilde{a}_0} \wedge \bm{b} \in \mathcal{C}_{\underbrace{a_m \cdots a_1 \left( 1 \oplus \tilde{a}_0 \right)}_{[a]_2}}^{(m)} \right\},
	    \end{eqnarray}
	\noindent where $\tilde{a}_0$ indicates a bitflip. The property of maximum globality of those two sets w.r.t.\ to any partition $\mathcal{C}^{(m+1)}$ is directly transferred by the induction requirement, as the information of all $m$ least-significant bits is required for that distinction. The marked parts in the above equations highlight, that also the remaining $n-m$ most-significant bits are required for deciding between action $\left[ a_m \cdots a_1 0 a_0 \right]_{10}$ and $\left[ a_m \cdots a_1 1 \tilde{a}_0 \right]_{10}$. Consequently, it is necessary to always consider the entire bitstring, which implies a globality measure value of $G_{f_{\mathcal{C}}} = n$.
\end{proof}

\subsection{\label{subsec:application_example}Example of Optimal Partitioning}

To get some intuition for the construction arising from \cref{lem:optimal_partitioning}, we construct a global partitioning for a setup with $n=4$ qubits and $M=4$ actions. Consequently, the partitions for the respective actions can be derived recursively with $m = \log_2(4)-1 = 1$.

\begin{eqnarray}
    \mathcal{C}^{(1)}_{[0]_2 = 00} &=& \left\{ \bm{b} \mid \highlight{b_3 \oplus b_2 \oplus b_1 = 0} \wedge \bm{b} \in \mathcal{C}^{(0)}_{0 \oplus 0 = [0]_2} \right\} = \left\{ \fcolorbox{black}{white}{000}\fcolorbox{white}{white}{0}, \fcolorbox{black}{white}{011}\fcolorbox{white}{white}{0}, \fcolorbox{black}{white}{101}\fcolorbox{white}{white}{0}, \fcolorbox{black}{white}{110}\fcolorbox{white}{white}{0} \right\} \\
    \mathcal{C}^{(1)}_{[3]_2 = 11} &=& \left\{ \bm{b} \mid \highlight{b_3 \oplus b_2 \oplus b_1 = 1} \wedge \bm{b} \in \mathcal{C}^{(0)}_{1 \oplus 1 = [0]_2} \right\} = \left\{ \fcolorbox{black}{white}{001}\fcolorbox{white}{white}{1}, \fcolorbox{black}{white}{010}\fcolorbox{white}{white}{1}, \fcolorbox{black}{white}{100}\fcolorbox{white}{white}{1}, \fcolorbox{black}{white}{111}\fcolorbox{white}{white}{1} \right\}
\end{eqnarray}

In both cases, after one step of recursion, the base case is reached.
\begin{equation}
    \mathcal{C}^{(0)}_{[0]_2} = \lbrace \bm{b} \mid b_3 \oplus b_2 \oplus b_1 \oplus b_0 = 0 \rbrace
\end{equation}

The construction for the remaining two partitions works totally equivalent:
\begin{eqnarray}
    \mathcal{C}^{(1)}_{[1]_2 = 01} &=& \left\{ \bm{b} \mid \highlight{b_3 \oplus b_2 \oplus b_1 = 1} \wedge \bm{b} \in \mathcal{C}^{(0)}_{0 \oplus 1 = [1]_2} \right\} = \left\{ \fcolorbox{black}{white}{001}\fcolorbox{white}{white}{0}, \fcolorbox{black}{white}{010}\fcolorbox{white}{white}{0}, \fcolorbox{black}{white}{100}\fcolorbox{white}{white}{0}, \fcolorbox{black}{white}{111}\fcolorbox{white}{white}{0} \right\} \\
    \mathcal{C}^{(1)}_{[2]_2 = 10} &=& \left\{ \bm{b} \mid \highlight{b_3 \oplus b_2 \oplus b_1 = 0} \wedge \bm{b} \in \mathcal{C}^{(0)}_{1 \oplus 0 = [1]_2} \right\} = \left\{ \fcolorbox{black}{white}{000}\fcolorbox{white}{white}{1}, \fcolorbox{black}{white}{011}\fcolorbox{white}{white}{1}, \fcolorbox{black}{white}{101}\fcolorbox{white}{white}{1}, \fcolorbox{black}{white}{110}\fcolorbox{white}{white}{1} \right\} 
\end{eqnarray}

Also here the recursion only has depth one and makes use of the other base case.
\begin{equation}
    \mathcal{C}^{(0)}_{[1]_2} = \lbrace \bm{b} \mid b_3 \oplus b_2 \oplus b_1 \oplus b_0 = 1 \rbrace
\end{equation}

We are guaranteed by \cref{lem:optimal_partitioning} that $G_{f_{\mathcal{C}}} = 4$, which is also easy to check for this example. It is straightforward to continue from here, i.e.\ repeatedly split the partitions to account for $8$ or $16$ actions. For example, the partition for action $a=5$ for $\abs{\mathcal{A}}=8$ actions is constructed as $\mathcal{C}^{(2)}_{[5]_2 = 101} = \left\{ \bm{b} \mid b_3 \oplus b_2 = 1 \wedge \bm{b} \in \mathcal{C}^{(1)}_{1 \left( 0 \oplus 1 \right) = [3]_2} \right\} = \left\{ 0101, 1001 \right\}$. \cref{eq:optimal_post-processing_function} allows to determine the class-association of an arbitrary bitstring without explicitly constructing the partitioned sets. For an setup with $8$ actions, i.e.\ $m = \log_2{8}-1 = 2$ we get
\begin{equation}
    f_{\mathcal{C}}(\bm{b}) = \left[ 1 0 \left( 1 \oplus 0  \right) \right]_{10} = 5,
\end{equation}
\noindent which correctly identifies $1001$ to be an element of partition $\mathcal{C}^{(2)}_{[5]_2}$.


\section{\label{sec:intro_fim}Analysis of Fisher Information for Reinforcement Learning Setup}

Different metrics for analyzing the \emph{expressibility} and \emph{trainability} for an \gls{qml} model have recently proposed by \citeauthor{Abbas_2021}. This work interprets the \gls{vqc} as a statistical model with the joint distribution $p_{\Theta}(x, y) = p_{\Theta}(y \mid x) p(x)$ for data pairs $(x,y)$. The prior $p(x)$ describes the distribution of input states, while $p_{\Theta}(y \mid x)$ gives the relationship between input and output of the model. We need to adapt this notion for the \gls{rl} setup, which results in $p_{\pi_{\Theta}}(\bm{s},a) = \pi_{\Theta} (a \mid \bm{s}) p_{\pi_{\Theta}}(\bm{s})$, where the prior state distribution $p_{\pi_{\Theta}} : \mathcal{S} \to \left[ 0,1 \right]$ depends on the policy in most environments. However, for practical reasons, we drop this explicit dependence, while still trying to imitate the distribution of states one would get by following the policy. Due to the loss of statistical independence of data samples (which is one of the most distinguishing features between supervised \gls{ml} and \gls{rl}), it is presently unclear to which extent the effective dimension can still be used as a well-defined capacity measure for the function approximation architecture we employ in our work. However, classification and action selection present related tasks. Therefore we assume, that the effective dimension of the \gls{vqc} circuit architecture (interpreted as a classifier for a supervised \gls{ml} task) at least serves as a rough indicator of its capacity for policy approximation in the \gls{rl} context. For the \texttt{ContextualBandits} environment, where the consecutive states are sampled independently at random (i.e.\ independent of the policy), the notion is exact.

The key component of the proposed measures~\cite{Abbas_2021} is the \gls{fim} $F(\Theta) \in \mathbb{R}^{\abs{\Theta} \times \abs{\Theta}}$. Briefly going into theoretical details, it is a Riemannian metric, given rise to by the Riemannian space formed by $\Phi$. From the full parameter space $\Phi$ each individual parameter set $\Theta$ is draw. For \glspl{vqc}  consisting mainly of parameterized rotations $\Phi \subset \left[-\pi, \pi \right)^{\abs{\Theta}}$ is a reasonable choice. In practice, it is necessary to approximate the \gls{fim} by the empirical \gls{fim}. With samples drawn independently and identically distributed from the ground truth $(s_i,a_i)^{k}_{i=1} \sim p_{\pi_{\Theta}}(s, a)$, it is given as
\begin{equation}
    \label{eq:empirical_fim}
	\tilde{F}_{k}(\Theta) = \frac{1}{k} \sum_{i=1}^{k} \left[ \nabla_{\Theta} \ln p_{\pi_{\Theta}}(s_i, a_i) \nabla_{\Theta} \ln p_{\pi_{\Theta}}(s_i, a_i)^t \right].
\end{equation}

An alternate formulation from \citeauthor{Sequeira_2022} drops the dependence on the prior state distribution. More concretely, this reduces \cref{eq:empirical_fim} to $\tilde{F}_{k}(\Theta)=\frac{1}{k} \sum_{i=1}^{k} \left[ \nabla_{\Theta} \ln \pi_{\Theta} \left( a_i \mid s_i \right) \nabla_{\Theta} \ln \pi_{\Theta} \left( a_i \mid s_i \right)^t \right]$. However, we assume that keeping the potentially inaccurate prior state information still should be beneficial.

\subsection{Expressive Power of the VQC-Model}

The \gls{fim} $F(\Theta)$, and for sufficiently high $k$ also $\tilde{F}_{k}(\Theta)$, captures the geometry of the parameter space, which allows us to define a measure for the expressibility of a given model. More explicitly, the \emph{effective dimension}~\cite{Abbas_2021} quantifies the variety of functions that can be approximated with a given model. For the parameter space $\Phi \subset \mathbb{R}^{\abs{\Theta}}$, the effective dimension of the statistical model $\mathcal{M}_{\Theta}$ associated with the \gls{vqc} setup can be defined as
\begin{equation}
    \label{eq:effdim}
    ed_{n}(\mathcal{M}_{\Phi}) := 2 \frac{\ln \left( \frac{1}{V_{\Phi}} \int_{\Phi} \sqrt{\det \left( I_{\abs{\Theta}} + \frac{n}{2 \pi \ln n} \hat{F}(\Theta)  \right)} d \Theta \right)}{\ln \left( \frac{n}{2 \pi \ln n} \right)}.
\end{equation}

The \gls{fim} $\hat{F}(\Theta)$ is normalized, such that $\frac{1}{V_{\Phi}} \int_{\Phi} \mathrm{tr} \left( \hat{F}(\Theta) \right) d \Theta = \abs{\Theta}$ holds, where $V_{\Phi} := \int_{\Phi} d \Theta$ denotes the volume of the parameter space. In practice, the (normalized) \gls{fim} is replaced with the respective empirical formulation. The parameter $n$ determines the effective resolution of the parameter space. Although the effective dimension is not guaranteed to increase monotonically with this data size, it is usually the case for \gls{ml} tasks.

\subsection{Trainability of the VQC-Model}

The spectrum of the \gls{fim}, i.e.\ its eigenvalue distribution, provides insights into the trainability of a model~\cite{Abbas_2021}. In general, trainability profits from an uniform spectrum, while a distorted one is suboptimal. As noted previously, in practice the \gls{fim} is replaced with its empirical approximation in \cref{eq:empirical_fim}.


\section{\label{sec:experiments_alternated_setup}Supplementary Experiments and Conventions}

We now establish some conventions regarding experimental setup and reproducibility. Initially, we experimented with a variety of different hyperparameter settings. Overall, the qualitative observations were quite stable throughout. For the results reported in this paper we fixed most hyperparameters, in order to make results more comparable. However, sometimes slight deviations are necessary to improve performance, which is typical for \gls{rl} and also \gls{qrl}~\cite{Franz_2022}. To start with, all experiments on the \texttt{CartPole-v0} environment use a learning rate of $\alpha_{\bm{\theta}}=0.01$ for the variational and $\alpha_{\bm{\lambda}}=0.1$ for the state scaling parameters. In all other environments, a value of $\alpha=0.1$ is used for all parameter sets. A similar distinction is made w.r.t.\ parameter initialization, where \texttt{CartPole-v0} setups select $\theta \sim \mathcal{N}\left( 0, 0.1 \right)$, while the base option is always to draw the variational parameters uniformly at random from $\left( - \pi, \pi \right]$. The state scaling parameters are all initialized to the constant value $1.0$. The parameter update is performed using the \emph{Adam} optimizer~\cite{Kingma_2014}, modified with the \emph{AMSGrad} adjustment~\cite{Reddi_2019}. A discount factor of $\gamma=0.99$ is used in all cases. No baseline function is used in any of the environments, as performance was found to be sophisticated even without. If not stated differently, the architecture from \cref{fig:vqc} with a depth of $d=1$ is used, where the number of qubits is adjusted to match the state dimensionality. In order to make \gls{rl} training curves a bit more stable, the results are usually averaged over ten independent runs. Additionally, the performance is averaged over the last $20$ episodes (displayed in darker colors). Some plots also denote the performance of a random agent with a black dashed line and the optimal expected reward with a solid black one.

To support the results from \cref{sec:experiments}, we also conducted experiments for other setups and environments. Basically, the qualitative observations were always consistent with the claims we made, although the peculiarity was sometimes weaker or stronger.

\subsection{\label{subsec:deeper_circuits}Increased Quantum Circuit Depth}

\begin{figure}
    \centering
    \subfigure[RL training performance]{
    \begin{tikzpicture}
        \centering
        \begin{axis}[
            name=plot1,
        	xlabel=episode,
        	ylabel=average collected reward,
        	grid=both,
        	xmin=-30,xmax=1030,
            ymin=0,ymax=210,
            tick label style={font=\footnotesize},
            xtick={0,200,400,600,800,1000},
        	width=0.45\linewidth,
        	height=.25\textheight,
        	axis x line=bottom, axis y line=left, tick align = outside,
            legend columns=1,
        	legend style={/tikz/every even column/.append style={column sep=0.1cm, row sep=0.1cm},at={(1.0,0.38)},anchor=east,yshift=-5mm,font=\footnotesize},
        	no marks]
        	\addplot[line width=.5pt,solid,color=darkgreen!40] %
            	table[x=episode,y=glob-avg,col sep=comma]{figures/data/data_partitioning_deep.csv};
            \addplot[line width=.5pt,solid,color=red!40] %
            	table[x=episode,y=loc3-avg,col sep=comma]{figures/data/data_partitioning_deep.csv};
            \addplot[line width=.5pt,solid,color=violet!40] %
            	table[x=episode,y=loc1-avg,col sep=comma]{figures/data/data_partitioning_deep.csv};
            \addplot[line width=1pt,solid,color=darkgreen] %
            	table[x=episode,y=glob,col sep=comma]{figures/data/data_partitioning_deep.csv};
            \addplot[line width=1pt,solid,color=red] %
            	table[x=episode,y=loc3,col sep=comma]{figures/data/data_partitioning_deep.csv};
            \addplot[line width=1pt,solid,color=violet] %
            	table[x=episode,y=loc1,col sep=comma]{figures/data/data_partitioning_deep.csv};
            \legend{,,,$\pi^{G_{f_{\mathcal{C}}}}_{=4}~(d=2)$,$\pi^{G_{f_{\mathcal{C}}}}_{=3}~(d=2)$,$\pi^{G_{f_{\mathcal{C}}}}_{=1}~(d=2)$}
            \addplot[color=black, domain=-25:1024, line width=1pt] {200.0};
            \addplot[color=white, domain=-10:1024, line width=5.5pt] {206.0};
            \addplot[dashed, color=black, domain=-25:1024, line width=1pt] {23.5};
        \end{axis}
        \end{tikzpicture}
        }
        \qquad
        \subfigure[Normalized Effective dimension]{
        \begin{tikzpicture}
        \begin{axis}[
            name=plot2,
        	xlabel=data size,
        	ylabel=normalized effective dimension,
        	grid=both,
        	xmin=-50,xmax=1000050,
            ymin=0.25,ymax=0.875,
            tick label style={font=\footnotesize},
        	width=0.45\linewidth,
        	height=.25\textheight,
        	xtick={5000,8000,10000,40000,60000,100000,150000,200000,500000,1000000},
            xticklabels={,,,,,0.1,,,0.5,1.0},
        	axis x line=bottom, axis y line=left, tick align = outside,
            legend columns=-1,
        	legend style={/tikz/every even column/.append style={column sep=0.1cm},at={(0.5,1.0)},anchor=south,yshift=-5mm,font=\footnotesize},
        	]
            \addplot[line width=1pt,solid,color=darkgreen,mark=+] %
            	table[x=data-size,y=g_4,col sep=comma]{figures/data/data_effdim_deep.csv};
            \addplot[line width=1pt,solid,color=red,mark=+] %
            	table[x=data-size,y=g_3,col sep=comma]{figures/data/data_effdim_deep.csv};
            \addplot[line width=1pt,solid,color=violet,mark=+] %
            	table[x=data-size,y=g_1,col sep=comma]{figures/data/data_effdim_deep.csv};
        \end{axis}
    \end{tikzpicture}
    }
    \caption{\label{fig:partitioning_deep}\gls{rl} training performance and associated effective dimension on the \texttt{CartPole-v0} environment. The same setup and policy formulations as in \cref{fig:partitioning} are used, only the depth of the underlying \glspl{vqc} is increased to $d=2$.}
\end{figure}
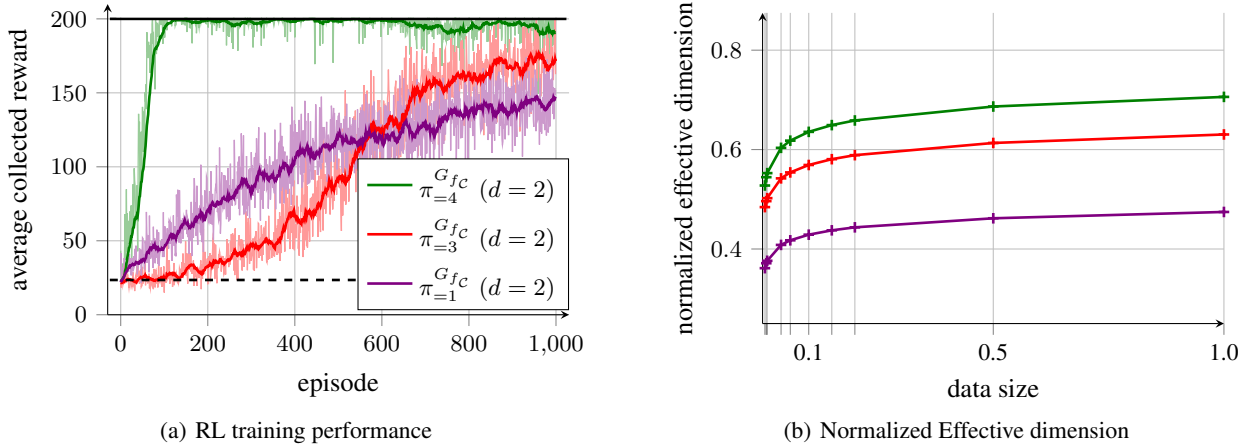

Instead of using circuits with depth $d=1$, we use data re-uploading with depth $d=2$ on the \texttt{CartPole-v0} environment. Due to this, and as the resulting circuits contain $40$ instead of $24$ parameters, the \gls{rl} performance intuitively should improve. In fact, that is what can be observed in \cref{fig:partitioning_deep}. Compared to the results in \cref{fig:partitioning}, the convergence speed of the policies $\pi^{G_{f_{\mathcal{C}}}}_{=3}$ and $\pi^{G_{f_{\mathcal{C}}}}_{=1}$ has improved quite a bit. Also, the least-global policy is finally able to learn a close to optimal policy after $1000$ additional episodes, which was not the case previously. Initially, it actually outperforms the policy with the higher policy but caught up with after approx. $600$ episodes. We consider this behavior to be caused by statistical fluctuation, which is common for \gls{rl} training. As the global policy $\pi^{G_{f_{\mathcal{C}}}}_{=4}$ showed already a good performance for $d=1$, there was not much room for improvement. Again we back up the results by the respective effective dimensions in \cref{fig:partitioning_deep}. Here also the predicted pattern holds, with a slight overall improvement over the smaller models.

Overall we conclude, that the increasing model complexity benefits the \gls{rl} performance and general expressibility of the model for all policy formulations. Still, there is a strong correlation between globality and \gls{rl} performance, although it is slightly less pronounced than in the original setting. As it is highly desirable for \gls{nisq} hardware to keep the circuit complexity as low as possible, using a sophisticated post-processing function should be preferred over increasing the circuit depth.

\subsection{\label{subsec:other_environments}Extension to Other Environments}

\begin{figure}
    \centering
    \subfigure[\texttt{FrozenLake} environment with $16$ discrete states and $4$ actions.]{
    \begin{tikzpicture}
        \centering
        \begin{axis}[
            name=plot1,
        	xlabel=episode,
        	ylabel=average collected reward,
        	grid=both,
        	xmin=-30,xmax=2530,
            ymin=-30,ymax=80,
            tick label style={font=\footnotesize},
        	width=0.43\linewidth,
        	height=.25\textheight,
        	axis x line=bottom, axis y line=left, tick align = outside,
            legend columns=1,
        	legend style={/tikz/every even column/.append style={column sep=0.1cm, row sep=0.1cm},at={(1.0,0.44)},anchor=east,yshift=-5mm,font=\footnotesize},
        	no marks]
        	\addplot[line width=.5pt,solid,color=darkgreen!40] %
            	table[x=episode,y=glob-d2-avg,col sep=comma]{figures/data/data_partitioning_frozen_reduced.csv};
        	\addplot[line width=.5pt,solid,color=teal!40] %
            	table[x=episode,y=glob-d1-avg,col sep=comma]{figures/data/data_partitioning_frozen_reduced.csv};
            \addplot[line width=.5pt,solid,color=violet!40] %
            	table[x=episode,y=loc2-d2-avg,col sep=comma]{figures/data/data_partitioning_frozen_reduced.csv};
            \addplot[line width=.5pt,solid,color=red!40] %
            	table[x=episode,y=loc2-d1-avg,col sep=comma]{figures/data/data_partitioning_frozen_reduced.csv};
            \addplot[line width=1pt,solid,color=darkgreen] %
            	table[x=episode,y=glob-d2,col sep=comma]{figures/data/data_partitioning_frozen_reduced.csv};
            \addplot[line width=1pt,solid,color=teal] %
            	table[x=episode,y=glob-d1,col sep=comma]{figures/data/data_partitioning_frozen_reduced.csv};
            \addplot[line width=1pt,solid,color=violet] %
            	table[x=episode,y=loc2-d2,col sep=comma]{figures/data/data_partitioning_frozen_reduced.csv};
            \addplot[line width=1pt,solid,color=red] %
            	table[x=episode,y=loc2-d1,col sep=comma]{figures/data/data_partitioning_frozen_reduced.csv};
            \addplot[dashed, color=black, domain=-25:2524, line width=1pt] {-26.0};
            \legend{,,,,$\pi^{G_{f_{\mathcal{C}}}}_{=4}~(d=2)$,$\pi^{G_{f_{\mathcal{C}}}}_{=4}~(d=1)$,$\pi^{G_{f_{\mathcal{C}}}}_{=2}~(d=2)$,$\pi^{G_{f_{\mathcal{C}}}}_{=2}~(d=1)$}
        \end{axis}
        \end{tikzpicture}
        }
        \qquad
        \subfigure[\texttt{ContextualBandits} environment with $32$ discrete states and $8$ actions.]{
        \begin{tikzpicture}
        \begin{axis}[
            name=plot2,
        	xlabel=episode,
        	ylabel=average collected reward,
        	grid=both,
        	xmin=-30,xmax=10030,
            ymin=-0.8,ymax=0.4,
            tick label style={font=\footnotesize},
        	width=0.43\linewidth,
        	height=.25\textheight,
        	axis x line=bottom, axis y line=left, tick align = outside,
            legend columns=1,
        	legend style={/tikz/every even column/.append style={column sep=0.1cm, row sep=0.1cm},at={(1.0,0.29)},anchor=east,yshift=-5mm,font=\footnotesize},
        	no marks]
            \addplot[line width=.5pt,solid,color=blue!40] %
            	table[x=episode,y=glob-avg,col sep=comma]{figures/data/data_partitioning_bandit_reduced.csv};
            \addplot[line width=.5pt,solid,color=orange!40] %
            	table[x=episode,y=loc3-avg,col sep=comma]{figures/data/data_partitioning_bandit_reduced.csv};
            \addplot[line width=1pt,solid,color=blue]
            	table[x=episode,y=glob,col sep=comma]{figures/data/data_partitioning_bandit_reduced.csv};
            \addplot[line width=1pt,solid,color=orange] %
            	table[x=episode,y=loc3,col sep=comma]{figures/data/data_partitioning_bandit_reduced.csv};
            \addplot[dashed, color=black, domain=-25:10024, line width=1pt] {-0.75};
            \legend{,,$\pi^{G_{f_{\mathcal{C}}}}_{=5}~(d=2)$,$\pi^{G_{f_{\mathcal{C}}}}_{=3}~(d=2)$}
        \end{axis}
    \end{tikzpicture}
    }
    \caption{\label{fig:partitioning_additional}\gls{rl} training performance for different environments. The underlying global post-processing functions are determined as described in \cref{subsubsec:optimal_partitioning}. These results are averaged over $100$ independent experiments;}
\end{figure}
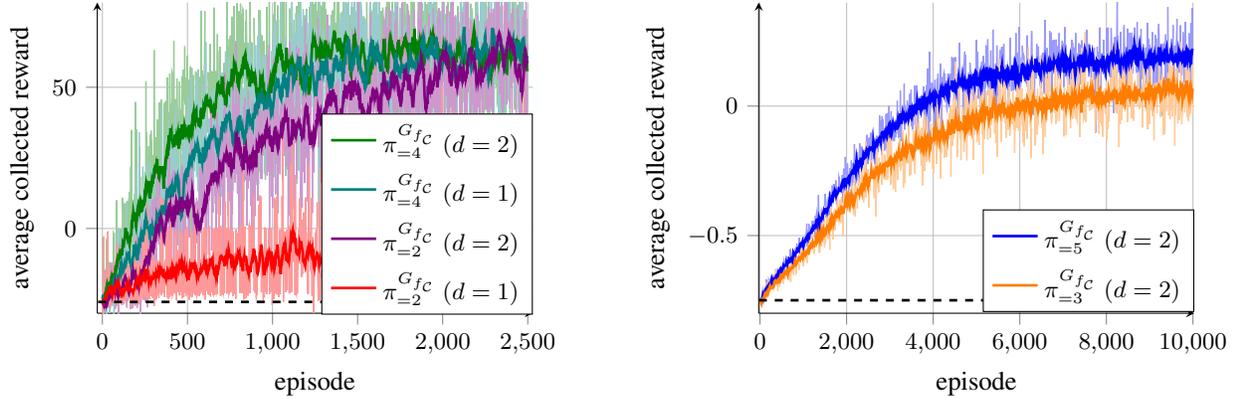

\begin{table*}
    \centering
        \caption{\label{tab:fisher_eigenvalues}Percentage of eigenvalues of \gls{fim} for two actions and models of increasing complexity that are close to zero ($<10^{-7}$). The empirical \gls{fim} is estimated with $100$ random parameter sets for each of the $100$ random states $\bm{s}$. Unlike in \cref{fig:effdim_spectrum}, the states elements are sampled uniformly at random from $\left[ -\pi, \pi \right)$ to allow for statements abstracted from a concrete \gls{rl} environment.\\}
        \begin{tabular}{cc|cccc}
            \toprule
            & & depth $d=1$ & depth $d=2$ & depth $d=3$ & depth $d=4$ \\
            \midrule
            \midrule
            & $\pi_{G_{f_{\mathcal{C}}}=4}$ & $0\%$ & & & \\
            $4$ qubits &$\pi_{G_{f_{\mathcal{C}}}=3}$ & $17\%$ & & & \\
            ($\abs{\Theta} = 24,-,-,-$) &$\pi_{G_{f_{\mathcal{C}}}=1}$ & $50\%$ & & &\\
            \midrule
            & $\pi_{G_{f_{\mathcal{C}}}=6}$ & $0\%$ & $0\%$ & & \\
            $6$ qubits &$\pi_{G_{f_{\mathcal{C}}}=3}$ & $26\%$ & $20\%$ & & \\
            ($\abs{\Theta} = 36,60,-,-$) &$\pi_{G_{f_{\mathcal{C}}}=1}$ & $48\%$ & $33\%$ & &\\
            \midrule
            & $\pi_{G_{f_{\mathcal{C}}}=8}$ & $0\%$ & $0\%$ & $0\%$ & \\
            $8$ qubits &$\pi_{G_{f_{\mathcal{C}}}=3}$ & $30\%$ & $21\%$ & $18\%$ & \\
            ($\abs{\Theta} = 48,80,112,-$) &$\pi_{G_{f_{\mathcal{C}}}=1}$ & $38\%$ & $26\%$ & $25\%$ &\\
            \midrule
            & $\pi_{G_{f_{\mathcal{C}}}=10}$ & $1\%$ & $0\%$ & $0\%$ & $0\%$ \\
            $10$ qubits &$\pi_{G_{f_{\mathcal{C}}}=3}$ & $28\%$ & $27\%$ & $19\%$ & $16\%$ \\
            ($\abs{\Theta} = 60,100,140,180$) &$\pi_{G_{f_{\mathcal{C}}}=1}$ & $30\%$ & $28\%$ & $23\%$ & $20\%$ \\
            \bottomrule
        \end{tabular}
\end{table*}



To really make use of the construction of global post-processing functions for larger action spaces proposed by \cref{lem:optimal_partitioning}, we now take a look at two additional environments. The first one is the gridworld environment \texttt{FrozenLake}~\cite{Brockman_2016}, which has to decide between $4$ possible actions in every step. Consequently, the lowest possible globality value of a suitable policy is $G_{f_{\mathcal{C}}} = \log_2(4) = 2$, while an optimal formulation satisfies $G_{f_{\mathcal{C}}} = n = 4$. The training results for those two policies and different circuit depth is depicted at the top of \cref{fig:partitioning_additional}. We can basically observe the same pattern as throughout this paper, where a more global post-processing function improves the convergence speed. Also increasing the overall model complexity benefits the performance, wherefore the gap in performance for the different policy formulation decreases.

The second choice is a \texttt{ContextualBandits} environment with $32$ states and $8$ actions. As we encode the states via $1$-qubit rotations in a binary fashion, the VQC has $\log_2(32) = 5$ qubits. This implies an upper bound of $G_{f_{\mathcal{C}}} \leq 5$ and a lower bound of $G_{f_{\mathcal{C}}} \geq 3$. The performance of the two models is depicted in the lower part of \cref{fig:partitioning_additional}. Also here the predicted correlation can be observed, although the difference is not that significant. This might be partly down to the reason, that both models struggle to come close to the optimal expected reward of $1.0$. By using more sophisticated encoding schemes, or bigger models, one should be able to change this.

We also computed the associated effective dimensions and Fisher spectra, which again followed the predicted scheme. Overall it can be concluded, that the proportionality of the globality measure associated with a post-processing function and its \gls{rl} performance translates to a variety of environments.


\section{\label{sec:additional_analysis_fisher_spectrum}Abstracted Analysis of Fisher Spectrum}

As the analysis of the Fisher information spectrum is a powerful tool to assess trainability, we apply it to a range of different setups. We keep things as general as possible by sampling state values uniformly at random from $\left( -\pi, \pi \right]^{\abs{\bm{s}}}$. Due to the periodicity of the rotation gates used for encoding, this should cover a wide range of potential scenarios. The most critical property of the Fisher information spectrum is the concentration of eigenvalues close to $0$. A high proportion of small eigenvalues indicates a flat parameter space, which makes optimization with any gradient-based technique difficult.

The results for systems ranging from $4$ to $10$ qubits and depths $d=1$ to $d=4$ are summarized in \cref{tab:fisher_eigenvalues}. All experiments assumed an action space of size $2$. Most interestingly, the percentage of small eigenvalues for a global post-processing function is almost negligible in all cases. On the contrary, the spectra for policies based on post-processing functions with $G_{\mathcal{C}}=1$ are quite degenerated. As one would expect, the post-processing functions with $G_{\mathcal{C}}=3$ start out quite well, yet the farther they deviate from the optimal globality value, the more degeneration occurs. These results solidify the statement, that models with global post-processing functions benefit a wide range of applications.

The convergence of eigenvalues towards $0$ does not seem to be proportional to the system size. This has some potential implications w.r.t.\ the barren plateau problem, which is closely related to the trainability of a model. The term describes the observation, that the expectation value and also the variance of the gradients w.r.t.\ the parameters decrease exponentially with the number of qubits~\cite{Mcclean_2018}. \citeauthor{Abbas_2021} relates this to the Fisher information spectrum, i.e.\ the model is vulnerable to barren plateaus, iff $\mathrm{tr} \left( \mathbb{E}_{\Theta} \left[ F(\Theta) \right] \right)$ decreases exponentially with increasing system size~\cite{Mcclean_2018}. Following \cref{tab:fisher_eigenvalues}, no setup shows a progressive convergence of eigenvalues to $0$, although there are some quantitative differences. The considered model sizes are probably still too small for barren plateaus to occur, so for concluding statements additional investigation is necessary. However, following the above statement, barren plateaus are at least unlikely to occur, especially for small-scale models with global post-processing functions. Similar observations have been made in other fields of \gls{qml}~\cite{Abbas_2021,Kashif_2023}. However, the interpretation of the global post-processing function as global measurement potentially makes barren plateaus inevitable for increasing system size as shown in \citeauthor{Cerezo_2021} -- although the validity for large action spaces is not immediate. For a larger qubit count and circuit depth the proposed post-processing technique also allows adjusting the globality. This can be used to find a good balance between the empirical performance improvement demonstrated in this paper and reduced globality to prevent barren plateaus.

Last but not least, it has to be stated, that a problem related to barren plateaus is also known in the classical case. More concretely, big deep neural networks often suffer from vanishing gradients~\cite{Hochreiter_1998}. If the results presented in this section can be extended to larger quantum systems, the improvement in terms of trainability compared to classical models might point towards a possible quantum advantage.



\end{document}